\newcommand{\psfrage}[1]{{\color{blue}{\sf[PS: #1]}}}
\newcommand{\mlfrage}[1]{{\color{magenta}{\sf[ML: #1]}}}
\renewcommand{\psfrage}[1]{}\renewcommand{\mlfrage}[1]{}
\title{Fast Many-to-Many Routing for Ridesharing with Multiple Pickup and Dropoff Locations} 
\titlerunning{Fast Many-to-Many Routing for Ridesharing with Multiple Pickup and Dropoff Loc.} 
\author{Moritz Laupichler}{Institute of Theoretical Informatics, Algorithm Engineering, Karlsruhe Institute of Technology, Kaiserstraße 12, 76131 Karlsruhe, Germany}{moritz.laupichler@kit.edu}{https://orcid.org/0009-0001-1193-3477}{}
\author{Peter Sanders}{Institute of Theoretical Informatics, Algorithm Engineering, Karlsruhe Institute of Technology, Kaiserstraße 12, 76131 Karlsruhe, Germany}{sanders@kit.edu}{https://orcid.org/0000-0003-3330-9349}{}
\authorrunning{M. Laupichler and P. Sanders} 
\keywords{Algorithm Engineering, Route Planning, Ridesharing, Multi-Modal} 
\algnewcommand\algorithmicforeach{\textbf{for each}}
\newcommand{\includeplot}[1]{%
  \includegraphics[]{plots/#1.pdf}%
}
\newcommand{\karri}[0]{KaRRi\xspace}
\newcommand{\parheader}[1]{\vspace{-3mm}\subparagraph*{#1.}}
\newcommand{\mysubsection}[1]{\vspace{-3mm}\subsection{#1}}
\newcommand{\myDisplayMath}[1]{\vspace{-2mm}\[ #1 \]}
\newcommand{\setc}[2]{\left \{ #1 \mid #2 \right \}}
\renewcommand{\gets}[0]{:=}
\newcommand{\Gveh}[0]{\ensuremath{G_{\textit{veh}}}}
\newcommand{\Gpsg}[0]{\ensuremath{G_{\textit{psg}}}}
\newcommand{\dist}[0]{\ensuremath{\delta}}
\newcommand{\distveh}[0]{\ensuremath{\delta_{\textit{veh}}}}
\newcommand{\distpsg}[0]{\ensuremath{\delta_{\textit{psg}}}}
\newcommand{\ellveh}[0]{\ensuremath{\ell_{\textit{veh}}}}
\newcommand{\ellpsg}[0]{\ensuremath{\ell_{\textit{psg}}}}
\newcommand{\orig}[0]{\textit{orig}}
\newcommand{\dest}[0]{\textit{dest}}
\newcommand{\Prho}[0]{\ensuremath{P_{\rho}}}
\newcommand{\Drho}[0]{\ensuremath{D_{\rho}}}
\newcommand{\maxPDDist}[0]{\ensuremath{\dist^{\max}_{\textit{PD}}}}
\newcommand{\capacity}[0]{\ensuremath{\textit{cap}}}
\newcommand{\occupancy}[0]{\ensuremath{\textit{o}}}
\newcommand{\numstops}[1]{\ensuremath{k(#1)}}
\newcommand{\numstopsnu}[0]{\numstops{\nu}}
\newcommand{\laststop}[1]{\ensuremath{s_{\numstops{#1}}}}
\newcommand{\laststopnu}[0]{\laststop{\nu}}
\newcommand{\Ord}[0]{\emph{ordinary}\xspace}
\newcommand{\OP}[0]{\emph{ordinary paired}\xspace}
\newcommand{\PBNS}[0]{\emph{pickup before next stop}\xspace}
\newcommand{\PALS}[0]{\emph{pickup after last stop}\xspace}
\newcommand{\DALS}[0]{\emph{dropoff after last stop}\xspace}
\newcommand{\rank}[0]{\ensuremath{\text{rank}}}
\newcommand{\Gup}[0]{\ensuremath{G^{\uparrow}}}
\newcommand{\Gdown}[0]{\ensuremath{G^{\downarrow}}}
\newcommand{\Vup}[0]{\ensuremath{V^\uparrow}}
\newcommand{\Vdown}[0]{\ensuremath{V^\downarrow}}
\newcommand{\Eup}[0]{\ensuremath{E^\uparrow}}
\newcommand{\Edown}[0]{\ensuremath{E^\downarrow}}
\newcommand{\dup}[0]{\ensuremath{\delta^{\uparrow}}}
\newcommand{\ddown}[0]{\ensuremath{\delta^{\downarrow}}}
\newcommand{\tdist}[0]{\ensuremath{\delta}}
\newcommand{\tddown}[0]{\ddown}
\newcommand{\treq}[0]{\ensuremath{t_{\textit{req}}}}
\newcommand{\tdep}[0]{\ensuremath{t_{\textit{dep}}}}
\newcommand{\tstopmin}[0]{\ensuremath{t_{\textit{stop}}^{\textit{min}}}}
\newcommand{\twaitmax}[0]{\ensuremath{t_{\textit{wait}}^{\textit{max}}}}
\newcommand{\ttripmax}[0]{\ensuremath{t_{\textit{trip}}^{\textit{max}}}}
\newcommand{\tservmin}[0]{\ensuremath{t_{\textit{serv}}^{\textit{min}}}}
\newcommand{\tservmax}[0]{\ensuremath{t_{\textit{serv}}^{\textit{max}}}}
\newcommand{\tdepmin}[0]{\ensuremath{t_{\textit{dep}}^{\textit{min}}}}
\newcommand{\tdepmax}[0]{\ensuremath{t_{\textit{dep}}^{\textit{max}}}}
\newcommand{\tarrmin}[0]{\ensuremath{t_{\textit{arr}}^{\textit{min}}}}
\newcommand{\tarrmax}[0]{\ensuremath{t_{\textit{arr}}^{\textit{max}}}}
\newcommand{\tvehwait}[0]{\ensuremath{t_{\textit{wait}}^{\textit{veh}}}}
\newcommand{\tdeppickup}[0]{\tdep}
\newcommand{\tarrminprime}[0]{\ensuremath{\tarrmin{}'}}
\newcommand{\tdepminprime}[0]{\ensuremath{\tdepmin{}'}}
\newcommand{\remleeway}[0]{\ensuremath{\lambda_{\textit{res}}}}
\newcommand{\tdetour}[0]{\ensuremath{t_{\textit{detour}}}}
\newcommand{\ttrip}[0]{\ensuremath{t_{\textit{trip}}}}
\newcommand{\ttripplus}[0]{\ensuremath{t_{\textit{trip}}^{+}}}
\newcommand{\tride}[0]{\ensuremath{t_{\textit{ride}}}}
\newcommand{\twalk}[0]{\ensuremath{t_{\textit{walk}}}}
\newcommand{\ctripvio}[0]{\ensuremath{c_{\textit{trip}}^{vio}}}
\newcommand{\cwaitvio}[0]{\ensuremath{c_{\textit{wait}}^{vio}}}
\newcommand{\tripweight}[0]{\ensuremath{\tau}}
\newcommand{\walkweight}[0]{\ensuremath{\omega}}
\newcommand{\gammatrip}[0]{\ensuremath{\gamma_{\textit{trip}}}}
\newcommand{\gammawait}[0]{\ensuremath{\gamma_{\textit{wait}}}}
\newcommand{\initpdetour}[0]{\ensuremath{\Delta_{p}^{\textit{init}}}}
\newcommand{\initddetour}[0]{\ensuremath{\Delta_{d}^{\textit{init}}}}
\newcommand{\resdetour}[1]{\ensuremath{\Delta_{#1}^{\textit{res}}}}
\newcommand{\cmaxglobal}[0]{\ensuremath{\hat{c}}} 
\newcommand{\cmax}[0]{\ensuremath{c_{\max}}}
\newcommand{\cmin}[0]{\ensuremath{c_{\min}}}
\newcommand{\Bs}[0]{\ensuremath{B^{\uparrow}}}
\newcommand{\Bt}[0]{\ensuremath{B^{\downarrow}}}
\newcommand{\Blast}[0]{\ensuremath{B^{\uparrow}_{\textit{last}}}}
\newcommand{\open}[0]{\ensuremath{\textit{open}}}
\newcommand{\closed}[0]{\ensuremath{\textit{closed}}}
\newcommand{\bestp}[0]{\ensuremath{p^\ast}}
\newcommand{\bestd}[0]{\ensuremath{d^\ast}}
\newcommand{\bestPALSins}[0]{\ensuremath{\iota_{\text{pals}}^{\ast}}}
\newcommand{\deltacmax}[0]{\ensuremath{\Delta_{\textit{c}}^{\max}}}
\newcommand{\deltadetour}[0]{\ensuremath{\Delta_{\textit{detour}}}}
\newcommand{\deltattrip}[0]{\ensuremath{\Delta_{\textit{trip}}}}
\newcommand{\deltawaitvio}[0]{\ensuremath{\Delta_{\textit{wait}}^{\textit{vio}}}}
\newcommand{\deltatripvio}[0]{\ensuremath{\Delta_{\textit{trip}}^{\textit{vio}}}}
\newcommand{\BerlinOne}[0]{\texttt{Berlin-1pct}\xspace}
\newcommand{\BerlinTen}[0]{\texttt{Berlin-10pct}\xspace}
\newcommand{\RuhrOne}[0]{\texttt{Ruhr-1pct}\xspace}
\newcommand{\RuhrTen}[0]{\texttt{Ruhr-10pct}\xspace}
\newcommand{\ShortBerlinOne}[0]{\texttt{B-1\%}\xspace}
\newcommand{\ShortBerlinTen}[0]{\texttt{B-10\%}\xspace}
\newcommand{\ShortRuhrOne}[0]{\texttt{R-1\%}\xspace}
\newcommand{\ShortRuhrTen}[0]{\texttt{R-10\%}\xspace}
\newcommand{\s}[0]{\si{\second}}
\newcommand{\ms}[0]{\si{\milli\second}}
\newcommand{\mus}[0]{\si{\micro\second}}
\newcommand{\numpickups}[0]{\ensuremath{N_{p}}}
\newcommand{\numdropoffs}[0]{\ensuremath{N_{d}}}
\newcolumntype{R}{>{$}r<{$}} 
\begin{document}

\maketitle

\begin{abstract}
We introduce KaRRi, an improved
algorithm for scheduling a fleet of shared
vehicles as it is used by services like UberXShare and
Lyft Shared.  We speed up the basic online algorithm that
looks for all possible insertions of a new
customer into a set of existing routes, we
generalize the objective function, and efficiently
support a large number of possible pick-up and
drop-off locations. This lays an algorithmic foundation
for ridesharing systems with higher
vehicle occupancy -- enabling greatly reduced cost and ecological impact
at comparable service quality.
We find that our algorithm computes assignments between vehicles and riders several times faster than a previous state-of-the-art approach.
Further, we observe that allowing meeting points for vehicles and riders can reduce the operating cost of vehicle fleets by up to $15\%$ while also reducing passenger wait and trip times. 
\end{abstract}
\newpage
\setcounter{page}{1}


\section{Introduction}
\label{sec:Introduction}
Current transportation systems are largely based on a
combination of individual transport (often
with heavy, polluting cars that consume
a lot of 
energy and space) and public
transportation that is often slow, inconvenient,
and underdeveloped.  Ridesharing
systems that intelligently control large fleets of
taxi-like vehicles have the potential to offer an
alternative that is more convenient than public
transportation and more economical and ecological
than individually used cars~\cite{8317926, Song2021,YU2017141}. This is particularly
promising if these vehicles have electrical
propulsion and autonomous piloting. However, current
such systems do not deliver on these promises as the effectively
usable capacity of the vehicles is quite small,
even threatening to \emph{increase} rather than
decrease the total number of driven car-kilometers~\cite{Morency2007}.
A main problem for larger capacity ridesharing vehicles is that picking up and dropping off
 customers introduces large delays for other passengers of the vehicle. 

This paper lays algorithmic groundwork for a
better integration of ridesharing fleets into a
multi-modal transportation system.  We focus
on the question of how local transportation (e.g.,
walking, bicycles or scooters) can be used to
reach a pickup or dropoff location (\emph{PD-location}) that causes
less delay for a vehicle, may be shared
with other customers, and may alleviate concerns of privacy for riders~\cite{STIGLIC201536,7549045}.  (A next, related step will
be an analysis of how public transportation like trains or
express buses may be used to cross large
distances faster, and more
economically/ecologically -- overall resulting in
an effective use of the hierarchy individual
transportation, ridesharing, and public
transit).

Our starting point is the LOUD system by Buchhold
et al. \cite{buchhold2021fast} that comprises
an online dispatching system for large ridesharing fleets.
It uses one-to-many routing based on bucket
contraction hierarchies (BCHs) \cite{knopp2007computing,geisberger2012exact} to efficiently
find the best insertion of a new customer into the
current schedule of a vehicle. This is a crucial step
for handling large fleets in real time and computing realistic
simulations of such systems in transportation research.

We introduce the \karri (\underline{Ka}rlsruhe \underline{R}apid \underline{Ri}desharing) algorithm that extends LOUD with the possibility of performing the pickup and dropoff of a passenger not at fixed locations but at any location in the vicinity of the passenger's origin and destination.
The algorithm computes optimal assignments of passengers to vehicles including locations for the pickup and dropoff.
We adapt LOUD's objective function to this new scenario by incorporating passenger wait times, trip times, and overheads for individual transportation to the pickup location and from the dropoff location.

Finding not only the best vehicle for a request but an optimal combination of a vehicle, a pickup location, and a dropoff location leads to a much larger number of possible assignments. 
To determine the best assignment, we need to solve a number of many-to-many routing problems between vehicle locations and \emph{all} possible PD-locations.
We use BCH queries to address this issue and propose novel speedup techniques both for general purpose bucket based queries and for the specific case of localized sources or targets.
We find that these techniques are also applicable for faster routing in the case of a single pickup and dropoff. 

Our experimental evaluation uses realistic data sets to evaluate the efficiency of these measures.
We find that our implementation is several times faster than LOUD in the case of a single pickup and dropoff.
For multiple PD-locations, our routing techniques are up to three orders of magnitude faster than a na\"ive extension of LOUD's techniques.
We also give first indications that allowing multiple PD-locations can reduce the operating costs of a taxi fleet by up to $15\%$ without increasing passenger wait times or trip times .  
A closer investigation of possible effects on the transport system is left to future work likely in cooperation with application experts.

\parheader{Paper Overview}
After a more detailed problem statement in \cref{sec:Problem_Statement} we introduce basic notation and techniques in \cref{sec:Preliminaries}.
We define the formal foundation for our cost model in the presence of multiple PD-locations in~\cref{sec:conceptual_changes}.
\Cref{sec:the_algorithm} gives an overview on the \karri algorithm while~\cref{sec:ordinary_op_and_pbns_insertions,sec:pickup_after_last_stop_insertions,sec:dropoff_after_last_stop_insertions} address individual challenges. 
In \cref{sec:experimental_evaluation}, we evaluate our approach experimentally.


\section{Problem Statement}
\label{sec:Problem_Statement}
This section describes and gives the formal foundations for the dynamic ridesharing problem considered by our approach.

\parheader{Road Network}
We consider a \emph{road network} to be a graph $G=(V_G,E_G)$ where edges represent road segments and vertices represent intersections.
Every edge $e = (v,w) \in E_G$ has a travel time $\ell(e)=\ell(v,w)$.
We denote the \emph{shortest path distance} (i.e. travel time) from a vertex $v$ to a vertex $w$ by $\dist(v,w)$.
Our algorithm uses two separate road networks $\Gveh$ and $\Gpsg$ with associated $\ellveh$, $\distveh$, $\ellpsg$, and $\distpsg$ to represent parts of the same road network accessible to vehicles and to pedestrians, respectively.  
Note that we only consider walking but $\Gpsg$ can also represent other modes of transportation, e.g. cycling.
A passenger can board or alight a vehicle at a location $v$ if $v$ is accessible in both networks, i.e. $v \in V_{veh} \cap V_{psg}$.

\parheader{Vehicle, Stop}
Our algorithm has access to a fleet $F$ of \emph{vehicles}.
Each vehicle $\nu = (l_i, \capacity, \tservmin, \tservmax)$ has an initial location $l_i$, a seating capacity $\capacity$ and a service time interval $[\tservmin, \tservmax)$.
The current \emph{route} $R(\nu) = \langle s_0(\nu), \dots, s_{\numstops{\nu}}(\nu) \rangle$ of a vehicle $\nu$ is a sequence of \emph{stops} scheduled for the vehicle.
The vehicle's current location $l_c(\nu)$ is always somewhere between its previous (or current) stop $s_0(\nu)$ and its next stop $s_1(\nu)$.
Thus, $\numstops{\nu} = |R(\nu)| - 1$ is the number of stops that the vehicle yet has to visit.
Each stop $s$ is mapped to a vertex $l(s) \in V$ in the graph.
Abusing notation, we may write $s_i$ instead of $s_i(\nu)$ and only $s_i$ instead of $l(s_i)$.
At each stop, a vehicle picks up and/or drops off one or more passengers, stopping for a minimum time of $\tstopmin$ which is a model parameter. 
We denote the occupancy of a vehicle between stops $s_i$ and $s_{i+1}$ by $\occupancy(s_i)$.
We update each vehicle's route as new stops are introduced for newly assigned passengers. 
For each stop $s \in R(\nu)$ we maintain the earliest possible arrival time $\tarrmin(s)$ and departure time $\tdepmin(s)$ according to the current schedule.

\parheader{Request}
In our scenario, the dispatcher receives ride requests and immediately assigns them to vehicles.
A request $r=(\orig, \dest, \treq)$ has an origin location $\orig \in V$, a destination location $\dest \in V$ and a time $\treq$ at which the request is issued. 
We do not allow pre-booking, i.e. the request time is also the earliest possible departure time. 

\parheader{Pickup (Location), Dropoff (Location)}
A possible pickup location (\emph{pickup} for short) is a location $v \in V_{psg} \cap V_{veh}$ that is reachable from $\orig(r)$ in $\Gpsg$ within a time radius $\rho$.
Similarly, a possible dropoff location (\emph{dropoff}) is a location $v \in V_{psg} \cap V_{veh}$ from which $\dest(r)$ can be reached in $\Gpsg$ within $\rho$.
The sets of pickups and dropoffs for $r$ and a radius $\rho$ are denoted by $\Prho(r)$ and $\Drho(r)$.
Let $\numpickups(r) = |\Prho(r)|$ and $\numdropoffs(r) = |\Drho(r)|$.
We collectively refer to pickups and dropoffs as \emph{PD-locations}.
We call a pair of pickup and dropoff a \emph{PD-pair}.
The radius $\rho$ is a model parameter.

\parheader{Insertion}
The goal of the dispatcher is to find an insertion of a pickup and dropoff of each request $r$ into any vehicle's route s.t. the cost of that insertion according to a cost function is minimized. 
We formalize an insertion as a tuple $(r, p, d, \nu, i, j)$ indicating that vehicle $\nu$ picks up request $r$ at pickup location $p \in \Prho(r)$ immediately after stop $s_i(\nu)$ and drops $r$ off at dropoff location $d \in \Drho(r)$ immediately after stop $s_j(\nu)$ with $0 \le i \le j \le \numstops{\nu}$ .

\mysubsection{Cost Function and Constraints}
\label{subsec:cost_function}
The cost of an insertion $\iota = (r, p, d, \nu, i, j)$ considers the added vehicle operation time $\tdetour(\iota)$ of $\nu$, the trip time $\ttrip(\iota)$ of $r$, the sum of increased trip times $\ttripplus(\iota)$ of existing passengers of $\nu$, and the walking time $\twalk(\iota)$ (we defer the exact definitions of these terms to~\cref{sec:conceptual_changes}).

We consider a number of constraints for eligible insertions put forward in~\cite{buchhold2021fast}.
After the insertion, the following must hold:
First, the \emph{occupancy} of $\nu$ must never be exceeded by an insertion.
Second, the vehicle must still reach its last stop before the \emph{end of its service time}.
Third, every passenger already assigned to $\nu$ must still be picked up at their pickup stop within a \emph{maximum wait time} $\twaitmax$.
Fourth, every passenger $\hat{r}$ already assigned to $\nu$ must still arrive at their destination within a \emph{maximum trip time} $\ttripmax(\hat{r}) = \alpha \cdot \distveh(\orig(\hat{r}), \dest(\hat{r})) + \beta$.
The values $\twaitmax$, $\alpha$ and $\beta$ are model parameters.

All four constraints are hard constraints wrt. requests already assigned to $\nu$.
If $\iota$ breaks a hard constraint, we set the cost to $\infty$.
For the request $r$ to be inserted, we treat the wait time and trip time constraints as soft constraints, i.e. violating them leads to cost penalties.
Assume, the passenger is picked up at $p$ at time $\tdep$.
We define the cost penalties as
\begin{myAlign}
  \cwaitvio(\iota) &= \gammawait \cdot \max\{\tdep - \treq(r) - \twaitmax, 0 \} \\
  \ctripvio(\iota) &= \gammatrip \cdot \max\{ \ttrip(\iota) - \ttripmax(r), 0 \}
\end{myAlign}
with model parameters $\gammawait$ and $\gammatrip$ that scale the severity of the penalties.

For the total insertion cost, we use a linear combination of the vehicle detour times, passenger trip times, walking times, and soft constraint violation penalties: 
\myDisplayMath{
  c(\iota) = \tdetour(\iota) + \tripweight \cdot (\ttrip(\iota) + \ttripplus(\iota)) + \walkweight \cdot \twalk(\iota) + \cwaitvio(\iota) + \ctripvio(\iota)
}

Note that we base our cost function on the one used in the LOUD algorithm~\cite{buchhold2021fast}.
However, the original cost function does not consider passenger trip times or walking times.
We weight the importance of these times with the model parameters $\tripweight$ and $\walkweight$.
Our cost function is equivalent to LOUD's if $\tripweight = \walkweight = 0$.


\section{Preliminaries}
\label{sec:Preliminaries}
In this section, we describe several algorithms for the computation of shortest paths that are being used in this work.
Furthermore, we summarize the LOUD algorithm for dynamic ridesharing~\cite{buchhold2021fast} that serves as the basis of our work.

\mysubsection{Shortest Path Algorithms}
\label{subsec:shortest_path_algorithms}
In the following, we explain a number of algorithms that compute different variants of shortest path queries on road networks.

\parheader{Dijkstra's Shortest Path Algorithm}
\label{par:dijkstras_shortest_path_algorithm}
\emph{Dijkstra's shortest path algorithm}~\cite{dijkstra1959note} computes the shortest path from a source $s \in V$ to all other vertices in a weighted graph $G=(V,E,\ell)$.

The algorithm stores a distance label $\tdist(s,v)$ for every $v \in V$.
An addressable priority queue (PQ) $Q$ with $\text{key}(v) = \tdist(s,v)$ contains active vertices. 
Initially, $Q \gets \{ s \}$, $\tdist(s,s) \gets 0$ and $\tdist(s,v) \gets \infty$ for $v \ne s$.
The algorithm proceeds by extracting the vertex with the smallest distance label from $Q$ and \emph{settling} it. 
To settle $u \in V$, each outgoing edge $(u,v) \in E$ is \emph{relaxed}.
The relaxation of $e=(u,v)$ tries to improve the distance label $\tdist(s,v)$ with $\tdist(s,u) + \ell(e)$.
If the distance is improved, $v$ is inserted into $Q$.
The algorithm stops when $Q$ becomes empty.

\parheader{Contraction Hierarchies}
\label{par:contraction_hierarchies}
\emph{Contraction Hierarchies (CHs)}~\cite{geisberger2012exact} are a speed-up technique for shortest path computations in road networks that exploits the hierarchical nature of road networks.
A CH is constructed in a pre-processing phase. 
Then, shortest path queries can be computed on the CH using restricted Dijkstra searches.

To construct a CH, all vertices in a road network $G=(V,E)$ are ordered heuristically by their importance or \emph{rank}~\cite{geisberger2012exact}.
Vertices are contracted in the order of increasing rank.
The contraction of $v \in V$ temporarily removes $v$ from the graph.
To preserve shortest paths, a \emph{shortcut edge} $(u,w)$ is created if $(u,v,w) \in E^2$ is the only shortest path between $u$ and $w$.

Let $E^+$ contain all original edges $E$ as well as all shortcut edges.
The graph $G^+=(V, E^+)$ constitutes the CH.
The length $\ell^+(e)$ of a shortcut edge $e$ is the sum of the lengths of replaced original edges while $\dist^+$ is the according distance function.
For the query phase, we partition $E^+$ into \emph{up-edges} $\Eup = \setc{(u,v) \in E^+}{\rank(u) < \rank(v)}$ and \emph{down-edges} $\Edown = \setc{(u,v) \in E^+}{\rank(u) > \rank(v)}$.
We define an \emph{upwards search graph} $\Gup \gets (V, \Eup)$ and a \emph{downwards search graph} $\Gdown \gets (V, \Edown)$.
The distance functions $\dup$ and $\ddown$ represent $\dist^+$ constrained to $\Gup$ and $\Gdown$.
The \emph{upwards CH search space} $\Gup_v = (\Vup_v, \Eup_v)$ rooted at a vertex $v \in V$ contains all vertices $\Vup_v \subseteq V$ that can be reached from $v$ in $\Eup$ and the according edges $\Eup_v \subseteq \Eup$.
The \emph{reverse downwards CH search space} $\Gdown_v = (\Vdown_v, \Edown_v)$ rooted at a vertex $v \in V$ contains all vertices $\Vdown_v \subseteq V$ from which $v$ is reachable in $\Edown$ and the according edges $\Edown_v \subseteq \Edown$.

For any two vertices $s,t \in V$, it can be shown that there is a shortest path from $s$ to $t$ that is an \emph{up-down path} in the CH, i.e. consists of only up-edges followed by only down-edges~\cite{geisberger2012exact}.
A \emph{CH-query} from a source $s \in V$ to a target $t \in V$ runs a forward Dijkstra search from $s$ in $\Gup$ and a reverse Dijkstra search from $t$ in $\Gdown$.
Whenever the searches meet, they find an up-down-path from $s$ to $t$, eventually finding a shortest path.
The query can stop once the radius of either Dijkstra search exceeds the best previously found distance from $s$ to $t$.

\parheader{Bucket Contraction Hierarchy Searches}
\label{par:bucket_contraction_hierarchy_searches}
\emph{Bucket Contraction Hierarchy (BCH)}~\cite{knopp2007computing,geisberger2012exact} searches find all shortest path distances from a source $s \in V$ to a set of targets $T \subseteq V$ in a road network $G=(V,E)$.
A CH $G^+$ of $G$ is used as the basis of the algorithm.

The idea is to construct a \emph{(target) bucket} $\Bt(v)$ at each vertex $v \in V$.
For each target $t \in T$, a reverse search in $\Gdown$ is run that adds an entry $(t, \ddown(v,t))$ to $\Bt(v)$ for every settled $v \in V$.
Then, a forward search from $s$ in $\Gup$ can compute tentative shortest path distances as $\dup(s,v) + \ddown(v,t)$ for every bucket entry $(t, \ddown(v,t)) \in \Bt(v)$ at every settled vertex $v \in \Vdown_s$.

BCH searches can analogously compute the distances from a set of sources to a single target.
In that case, we speak of \emph{source buckets} $\Bs(v)$ for every $v \in V$.

The advantage of BCH searches is that the search space of each source and each target is only traversed once, either to compute bucket entries or to scan bucket entries.
However, storing the bucket entries requires more memory than individual point-to-point CH queries.

\parheader{Bundled Searches}
\label{par:bundled_searches}
Dijkstra-based shortest path algorithms for multiple sources can make use of \emph{bundled searches} where the searches for $k$ sources are advanced simultaneously.
A bundled search maintains $k$ tentative distance labels at each vertex.
The search is rooted at each of the $k$ sources. 
Initially, the $j$-th distance label at the $j$-th source is set to $0$, and all other $kn-k$ distance labels are set to $\infty$.
As usual, vertices are settled by relaxing each outgoing edge.
When the search relaxes an edge $(u,v) \in E$, it tries to update all $k$ distance labels at $v$.

A bundled relaxation can be more cache efficient than $k$ individual relaxations as all $k$ distances are stored in consecutive memory.
However, the relaxation of $(u,v) \in E$ may perform unproductive work if not all $k$ searches have reached $u$ yet.
Thus, bundling is effective if all $k$ searches relax largely the same edges.
The value of $k$ is a tuning parameter.

The concept of bundled searches was first introduced for Dijkstra searches used for the computation of arc-flags under the name \emph{centralized searches}~\cite{hilger2009fast}.
Since then, bundled searches have been used in a number of Dijkstra-based many-to-many shortest path algorithms~\cite{bauer2010sharc,yanagisawa2010multi,delling2011faster,delling2013phast,delling2017customizable}.
More recently, the idea has been extended to point-to-point queries in CHs~\cite{buchhold2019real}.

\parheader{Instruction-Level Parallelism in Bundled Searches}
Additionally, single-instruction multiple-data (SIMD) parallelism can be utilized for bundled searches~\cite{buchhold2019real}.
Modern CPUs provide special vector registers and instructions that can store and manipulate multiple data items simultaneously.
We can vectorize the computations needed during edge relaxations s.t. $k$ computations are performed at the same time using a single vector instruction.
SIMD instructions can substantially speed up bundled searches~\cite{buchhold2019real}.

\mysubsection{LOUD}
\label{subsec:LOUD}
Our algorithm is based on the dynamic ridesharing dispatching algorithm \emph{LOUD}~\cite{buchhold2021fast}. 

Given a fleet of vehicles and routes, the online algorithm matches incoming ridesharing requests to vehicles.
For each request, a feasible insertion of the request's origin $o$ and destination $d$ into a vehicle's route is found s.t. the detour of the vehicle is minimized.

\parheader{Elliptic Pruning}
To compute the costs of possible insertions, the algorithm requires the distances between existing vehicle stops and $o$ and $d$.
LOUD computes these distances using BCHs with bucket entries for each vehicle stop and queries run from $o$ and $d$.

We refer to these BCH searches as \emph{elliptic BCH searches} as they utilize a pruning technique for these buckets called \emph{elliptic pruning}:
Each insertion is subject to the same soft and hard constraints that we describe in~\cref{subsec:cost_function}.
The wait time and trip time hard constraints of passengers already assigned to a vehicle $\nu \in F$ define a \emph{leeway} $\lambda(s_i,s_{i+1})$, i.e. a maximum permissible detour, between each pair of consecutive stops $(s_i,s_{i+1}) \in R(\nu)$. 
Any detour that exceeds $\lambda(s_i,s_{i+1})$ breaks some hard constraint and is infeasible. 
The leeway $\lambda(s_i,s_{i+1})$ defines a detour ellipse that contains all vertices at which a pickup or dropoff may be made between $s_i$ and $s_{i+1}$ without breaking a hard constraint.
Thus, bucket entries for $s_i$ and $s_{i+1}$ only need to be generated at vertices within the ellipse.
Elliptic pruning vastly reduces the number of bucket entries that need to be scanned by the BCH searches and limits the number of candidate vehicles for insertions~\cite{buchhold2021fast}.

\parheader{Last Stop Distances}
LOUD also allows the insertion of the origin and/or destination after the last stop of a vehicle's route.
Here, elliptic pruning is not applicable since the leeway of any vehicle is unbounded after the last stop.
Instead, LOUD uses reverse Dijkstra queries in the road network rooted at $o$ or $d$ to find the distances from last stops to $o$ or $d$.
These Dijkstra queries, particularly for distances from last stops to the destination of a request, constitute a significant part (at least $60\%$ and up to more than $90\%$) of the total runtime of LOUD.


\section{Conceptual Changes for Multiple Pickup and Dropoff Locations}
\label{sec:conceptual_changes}
We observe that introducing passenger movement for pickup and dropoff locations requires a careful consideration of its effects on vehicle detours and passenger trips.
In the following, we describe these effects in detail, leading us to the formal foundation of our cost function.

Remark that we ignore two special cases in this section for the sake of simplicity:
First, we assume that the pickup and dropoff for an insertion $\iota = (r, p, d, \nu, i, j)$ are inserted after the next stop of $\nu$, i.e. $i \ge 1$.
Second, we do not consider the possibility of $p$ or $d$ coinciding with existing stops, i.e. we assume $p \ne l(s_i)$ and $d \ne l(s_j)$.
We ignore these cases as they would lead to bloated definitions.
However, with knowledge of the vehicle's current location $l_c(\nu)$ and the location $l(s)$ of each stop $s \in R(\nu)$, the ignored cases could be integrated into the following definitions in a straight forward manner.

\subsection{Walking Time and Walking to the Destination}
\label{subsec:walking_time_and_walking_to_the_destination}
The walking time of a regular insertion $\iota=(r,p,d,\nu,i,j)$ is simply $\twalk(\iota) \gets \distpsg(\orig,p) + \distpsg(d, \dest)$.

We allow each passenger $r$ to walk from their origin to their destination without ever boarding a vehicle.
This requires a manner of pseudo-insertion $\iota_{\text{psg}}$ where the passenger is matched to no vehicle at all. 
Then, the cost of $\iota_{\text{psg}}$ depends only on the walking distance $\twalk(\iota_{\text{psg}}) = \ttrip(\iota_{\text{psg}}) = \distpsg(\orig, \dest)$.
The pseudo-insertion affects no vehicle operation times or trip times of other passengers. 
We ignore the wait time soft constraint since the passenger does not wait for a vehicle.
In effect, the total cost is $c(\iota_{\text{psg}}) = \tripweight \cdot \ttrip(\iota_{\text{psg}}) + \walkweight \cdot \twalk(\iota_{\text{psg}}) + \ctripvio(\iota_{\text{psg}})$.
We explicitly allow pseudo-insertions for any distance $\distpsg(\orig,\dest)$, i.e. the distance does not have to be found within the radius $\rho$ around $\orig$ or $\dest$.
Instead, we use a CH query in the passenger graph to find $\distpsg(\orig,\dest)$.
The cost $c(\iota_{\text{psg}})$ can serve as a first upper bound on the cost of any insertion.

\subsection{Vehicles Waiting for Passengers}
\label{subsec:vehicles_waiting_for_passengers}
In the traditional dynamic ridesharing model, a request $r=(\orig, \dest, \treq)$ always waits to be picked up by the vehicle at the origin location $\orig$.
The request is issued at time $\treq$ and the vehicle $\nu$ matched to the request can start making its way to the pickup location at the earliest at $\treq$. 
If the pickup location is $\orig$ this means that the vehicle will always arrive at the pickup location later than the passenger, i.e. only the passenger can wait for the vehicle, not the other way around.
In that case, the time a vehicle is stopped at any stop is always exactly $\tstopmin$ (unless the vehicle is idling, i.e. it currently has no other stops to get to).
Therefore, for each stop $s$, the scheduled arrival time can be inferred from the scheduled departure time as $\tarrmin(s) = \tdepmin(s) - \tstopmin$.
In the traditional dynamic ridesharing model, it suffices to store the departure time at each stop for the full vehicle schedule~\cite{buchhold2021fast}.
The shortest path distance between two consecutive stops $s_i$ and $s_{i+1}$ is then $\distveh(s_i, s_{i+1}) = \tdepmin(s_{i+1}) - \tstopmin - \tdepmin(s_i)$. 

Now, consider an insertion $\iota = (r, p, d, \nu, i, j)$ with $p \ne \orig$.
Then, both the vehicle and the passenger have to travel to $p$ starting at $\tdepmin(s_i)$ and $\treq$, respectively.
Consequently, the vehicle can arrive at $p$ earlier than the passenger, precisely if $\tdepmin(s_i) + \distveh(s_i, p) < \treq + \distpsg(\orig, p)$.
In that case, the vehicle needs to wait for the passenger at $p$. 
An inserted stop $s$ at $p$ can then take longer than $\tstopmin$. 
We, therefore, define the actual earliest possible departure time at each pickup as the maximum of the earliest possible departure time of the vehicle and that of the passenger at the pickup.
The vehicle $\nu$ can depart from $p$ at the earliest after a stop time of $\tstopmin$ and the passenger can depart as soon as they arrive at $p$ leading us to
\myDisplayMath{
  \tdeppickup(\iota) = \max(\tdepmin(s_i) + \distveh(s_i, p) + \tstopmin, \treq + \distpsg(\orig, p))\text{.}
}
We later use $\tdeppickup(\iota)$ in the definitions of the vehicle detour, passenger wait time and passenger trip time needed for the cost function (see~\cref{subsec:cost_function}). 
In particular, the wait times of a vehicle or of a passenger contribute to the vehicle operation times and passenger trip times.

Whereas in the traditional model we can infer $\tarrmin(s)$ from $\tdepmin(s)$ for a stop $s$, we have to now explicitly store $\tarrmin(s)$ and $\tdepmin(s)$.
The distance between a pair of consecutive stops $(s_i, s_{i+1})$ can then be derived as $\distveh(s_i, s_{i+1}) = \tarrmin(s_{i+1}) - \tdepmin(s_i)$.

We denote the scheduled wait time of a vehicle at stop $s$ with $\tvehwait(s) = \tdepmin(s) - \tarrmin(s) - \tstopmin$.

\subsection{Added Vehicle Operation Time and Passenger Trip Times}
\label{subsec:times_for_costs_definitions}
In the following, we define the added vehicle operation time $\tdetour(\iota)$, the trip time for the new passenger $\ttrip(\iota)$, as well as the sum of added trip times for existing passengers $\ttripplus(\iota)$ for an insertion $\iota = (r, p, d, \nu, i, j)$.
Vehicle wait times as explained in~\cref{subsec:vehicles_waiting_for_passengers} have an effect on all of these times.

In order to explain this effect, we first focus on the added vehicle operation time caused by the insertion $\iota$.
Conceptually, the added vehicle operation time is the time difference between the vehicle's arrival time at the last scheduled stop after the insertion and the arrival time at the last scheduled stop before the insertion.

\parheader{Initial Detours}
We start by explaining added vehicle operation times in a scenario without vehicle waiting times, i.e. $\tvehwait(s) = 0$ for all $s \in R(\nu)$.
In this case, the added vehicle operation time for performing a pickup at $p$ and a dropoff at $d$ is simply equal to the detour that vehicle $\nu$ has to take in its route after stop $s_i$ and stop $s_j$, respectively.
We call these detours the initial pickup detour and initial dropoff detour.

\begin{definition} 
\label{def:initial_detour}
The initial pickup (dropoff) detour for an insertion $\iota = (r,p,d,\nu,i,j)$ is the detour that results from the vehicle $\nu$ first driving to $p$ ($d$) after stop $s_i$ ($s_j$) instead of driving to $s_{i+1}$ ($s_{j+1}$) directly.\\ 
Formally, we define the initial pickup detour as 
\myDisplayMath{ 
  \initpdetour(\iota) \gets \begin{cases} 
    \tdeppickup(\iota) - \tdepmin(s_i) & \text{ if } i = j \\
    \tdeppickup(\iota) - \tdepmin(s_i) + \distveh(p, s_{i+1}) - \distveh(s_i, s_{i+1}) & \text{ if } i \ne j 
  \end{cases} 
}
We define the initial dropoff detour as
\myDisplayMath{ 
  \initddetour(\iota) \gets \begin{cases} 
    \distveh(p, d) + \tstopmin & \text{ if } i = j \text{ and } j = \numstops{\nu} \\
    \distveh(s_j, d) + \tstopmin & \text{ if } i \ne j \text{ and } j = \numstops{\nu} \\
    \distveh(p, d) + \tstopmin + \distveh(d, s_{j+1}) - \distveh(s_j, s_{j+1}) & \text{ if } i = j \text{ and } j \ne \numstops{\nu} \\
    \distveh(s_j, d) + \tstopmin + \distveh(d, s_{j+1}) - \distveh(s_j, s_{j+1}) & \text{ if } i \ne j \text{ and } j \ne \numstops{\nu}
  \end{cases}
}
\end{definition}

Note that if $i = j$, the vehicle has to make a combined detour to go from stop $s_i$ to $p$, then to $d$ and finally to $s_{i+1}$. 
In that case, only the leg between $s_i$ and $s_{i+1}$ of the existing route $R(\nu)$ is replaced by the combined detour.
The definition of initial pickup and dropoff detour account for this by subtracting the distance between $s_i$ and $s_{i+1}$ only once in the dropoff detour and not in the pickup detour if $i = j$.

\parheader{Residual Detours, Added Vehicle Operation Time}
Without vehicle wait times, the departure time $\tdeppickup(\iota)$ at pickup $p$ is the arrival time of $\nu$ at $p$ plus the minimum stopping time $\tstopmin$, so $\tdeppickup(\iota) = \tdepmin(s_i) + \dist(s_i, p) + \tstopmin$ (see~\cref{subsec:vehicles_waiting_for_passengers}).
The first effect that vehicle wait times can have on the detour is the fact that $\tdeppickup(\iota)$ can depend on the passenger if they arrive at the pickup later than the vehicle.
As defined in~\cref{subsec:vehicles_waiting_for_passengers}, with passenger movement we have $\tdeppickup(\iota) = \max(\tdepmin(s_i) + \dist(s_i,p) + \tstopmin, \treq(r) + \distpsg(\orig, p))$. 
Therefore, in a scenario with vehicle waiting times, the initial pickup detour $\initpdetour(\iota)$ can be larger than without them.

Furthermore, existing vehicle wait times at stops in $R(\nu)$ can also factor into the added vehicle operation time.
Assume $j + 1 < a < \numstops{\nu}$, $\tvehwait(s_{a}) > 0$ and $\tvehwait(s_b) = 0$ for $b \ne a$.
Let $\tarrminprime(s, \iota)$ and $\tdepminprime(s, \iota)$ describe the scheduled arrival time and departure time at stop $s \in R(\nu)$ after we perform insertion $\iota$.
After the insertion, vehicle $\nu$ will arrive at stop $s_{j+1}$ with a delay of $\resdetour{j + 1}(\iota) = \initpdetour(\iota) + \initddetour(\iota)$ compared to before the insertion, so $\tarrminprime(s_{j+1}) = \tarrmin(s_{j+1}) + \resdetour{j + 1}(\iota)$.
This delay in arrival times is propagated forward through the route until stop $s_{a}$, so for the delay arriving at $s_a$ we have $\resdetour{a}(\iota) = \resdetour{j + 1}(\iota)$. 
However, before the insertion, the vehicle has to wait for a duration of $\tvehwait(s_a)$ at stop $s_a$ for a passenger that is scheduled to be picked up here.
Since the vehicle now arrives at $s_a$ later, the waiting time may decrease.
If the delay at $s_a$ is larger than the previous wait time at $s_a$, the vehicle may even now arrive at $s_a$ later than the passenger.

Effectively, the vehicle now uses the time that was spent waiting for the passenger at $s_a$ to perform part of the detour needed for the pickup and dropoff of $\iota$.
Thus, the vehicle wait times at stops after the pickup or dropoff of an insertion act as buffers to the added vehicle operation time.
In our example, we have \begin{align*}\tarrminprime(s_a, \iota) &= \tarrmin(s_a) + \resdetour{a}(\iota) \text{ but} \\ \tdepminprime(s_a, \iota) &= \tdepmin(s_a) + \resdetour{a + 1}(\iota) \text{ where } \resdetour{a + 1}(\iota) = \max(\resdetour{a}(\iota) - \tvehwait(s_a), 0)\end{align*}
Note that the vehicle wait time at $s_a$ reduces the delay for the departure time at $s_a$ and also the arrival times at all following stops.
Formally, for every stop $s_b$ with $a < b$ we get $\resdetour{b}(\iota) = \resdetour{a + 1}(\iota)$ so $\tarrminprime(s_b, \iota) = \tarrmin(s_b) + \resdetour{a + 1}(\iota)$.

This leads us to the definition of residual detours which describe the actual delay that an insertion $\iota$ causes for the arrival time at any existing stop of the route $R(\nu)$.
It takes into account the possibility that a vehicle wait time may exist at any stop that reduces the delay for all later stops.

\begin{definition} 
\label{def:residual_detour}
  The residual detour $\resdetour{a}(\iota)$ for an insertion $\iota = (r, p, d, \nu, i, j)$ at stop $s_a \in R(\nu)$ describes how much later the vehicle $\nu$ will arrive at stop $s_a$ after the insertion is performed. 
  Formally, we define it as
  \[ \resdetour{a}(\iota) \gets \begin{cases} 
      0 & \text{ if } a \le i \\
      \initpdetour(\iota) & \text{ if } i + 1 = a \le j \\
      \max(\resdetour{j}(\iota) - \tvehwait(s_j), 0) + \initddetour(\iota) & \text{ if } a = j + 1 \text{ and } i \ne j \\
      \initpdetour(\iota) + \initddetour(\iota) & \text{ if } a = j + 1 \text{ and } i = j \\
      \max(\resdetour{a - 1}(\iota) - \tvehwait(s_{a-1}), 0) & \text{ otherwise }
  \end{cases}\]
\end{definition}

Residual detours allow us to easily define the new arrival times $\tarrminprime$ and departure times $\tdepminprime$ after an insertion $\iota$ at each stop $s_a \in R(\nu)$ as
\begin{align*}
  \tarrminprime(s_a, \iota) &= \tarrmin(s_a) + \resdetour{a}(\iota) \text{ and}\\
  \tdepminprime(s_a, \iota) &= \max(\tarrmin{}'(s_a) + \tstopmin, \tdepmin(s_a)) \text{.}
\end{align*}

Then, the added vehicle operation time can be defined as:\\

\begin{definition} 
\label{def:added_vehicle_operation_time}
  The added vehicle operation time $\tdetour(\iota)$ caused by an insertion $\iota = (r, p, d, \nu, i, j)$ is defined as
\[ \tdetour(\iota) \gets  \begin{cases}
                            \initpdetour(\iota)+ \initddetour(\iota) & \text{ if } i = j = \numstops{\nu} \\
                            \tarrminprime(\laststop{\nu}, \iota) - \tarrmin(\laststop{\nu}) + \initddetour(\iota) & \text{ if } i < j = \numstops{\nu} \\
                            \tarrminprime(\laststop{\nu}, \iota) - \tarrmin(\laststop{\nu}) & \text{ otherwise}   
                          \end{cases}\] 
\end{definition}

\parheader{Trip Time, Added Trip Time For Existing Passengers}
The previous definitions also enable us to define the trip time $\ttrip(\iota)$ of the passenger associated with the request $r$. 
The trip time for $r$ is simply the time between the request time $\treq(r)$ and the scheduled arrival of the passenger at their destination $\dest(r)$.
Chronologically, a trip starts with the passenger moving to the pickup point $p$ and possibly waiting for the vehicle to arrive at $p$.
This is followed by the actual ride in the vehicle from $p$ to $d$.
Finally, the passenger moves from the dropoff $d$ to their destination.

\begin{definition} 
\label{def:trip_time}
  The trip time $\ttrip(\iota)$ for an insertion $\iota = (r, p, d, \nu, i, j)$ is defined as 
  \[ \ttrip(\iota) \gets (\tdeppickup(\iota) - \treq(r)) + \tride(\iota) + \distpsg(d, \dest(r))  \]
  where 
  \[ \tride(\iota) \gets \begin{cases}
    \distveh(p, d) & \quad\text{ if } i = j \\
    \distveh(p, s_{i+1}) + (\tdepminprime(s_j, \iota) - \tarrminprime(s_{i+1}, \iota)) + \distveh(s_j, d) & \quad\text{ if } i \ne j 
  \end{cases} \] 
\end{definition}

The final contribution to the total cost of an insertion is the trip time that is added for passengers that are already assigned to vehicle $\nu$.
Consider a request $r'$ that has previously been assigned to vehicle $\nu$.
Assume $r'$ is picked up by $\nu$ at stop $s_{i'}$ and dropped off at stop $s_{j'}$.
By definition of residual detours, performing the insertion $\iota$ will delay the arrival of the vehicle at $s_{j'}$ by $\resdetour{j'}(\iota)$.
Therefore, the trip time of $r'$ increases by that same amount of time.

\begin{definition} 
\label{def:added_trip_time_for_existing_passengers}
  Let $\numdropoffs(s)$ be the number of dropoffs currently scheduled to be performed at stop $s \in R(\nu)$ for a vehicle $\nu \in F$.
  The combined added trip time for existing passengers $\ttripplus(\iota)$ caused by an insertion $\iota = (r, p, d, \nu, i, j)$ is defined as
  \[ \ttripplus(\iota) \gets \sum_{a = i + 1}^{\numstops{\nu}} \numdropoffs(s_a) \cdot \resdetour{a}(\iota) \]
\end{definition}


\section{Algorithm Overview}
\label{sec:the_algorithm}
We introduce the \emph{\karri} (\underline{Ka}rlsruhe \underline{R}apid \underline{Ri}desharing) algorithm that efficiently answers ridesharing requests with multiple PD-locations using fast many-to-many routing.  
\begin{figure}[tb]
  \centering
  \begin{tabular}{ccl}
    \hline
    \makecell[c]{Ordinary} & \makecell[c]{$0 < i < j < \numstopsnu$} & \makecell[l]{\rule{0pt}{1.1cm}\rule[-.15cm]{0pt}{.15cm}\includegraphics[height=0.9cm,page=2]{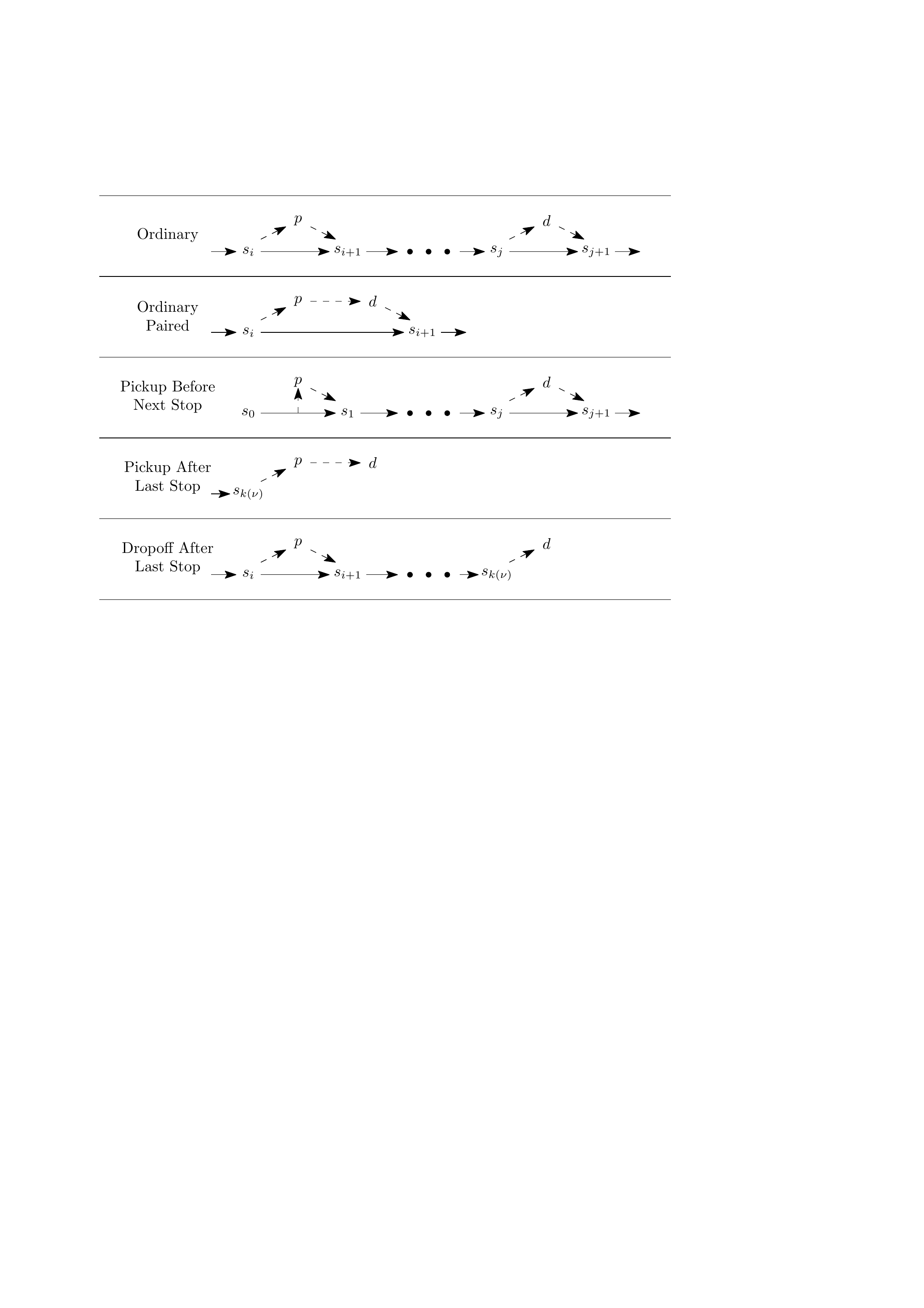}} \\
    \hline
    \makecell[c]{Ordinary\\Paired (OP)} & \makecell[c]{$0 < i = j < \numstopsnu$} & \makecell[l]{\rule{0pt}{1.1cm}\rule[-.15cm]{0pt}{.15cm}\includegraphics[height=0.9cm,page=3]{figures/insertion_types.pdf}} \\
    \hline
    \makecell[c]{Pickup Before\\Next Stop (PBNS)} & \makecell[c]{$0 = i \le j < \numstopsnu$} & \makecell[l]{\rule{0pt}{1.1cm}\rule[-.15cm]{0pt}{.15cm}\includegraphics[height=0.9cm,page=4]{figures/insertion_types.pdf}} \\
    \hline
    \makecell[c]{Pickup After\\Last Stop (PALS)} & \makecell[c]{$0 \le i = j = \numstopsnu$} & \makecell[l]{\rule{0pt}{1.1cm}\rule[-.15cm]{0pt}{.15cm}\includegraphics[height=0.9cm,page=5]{figures/insertion_types.pdf}} \\
    \hline
    \makecell[c]{Dropoff After\\Last Stop (DALS)} & \makecell[c]{$0 < i < j = \numstopsnu$} & \makecell[l]{\rule{0pt}{1.1cm}\rule[-.15cm]{0pt}{.15cm}\includegraphics[height=0.9cm,page=6]{figures/insertion_types.pdf}} \\
    \hline
  \end{tabular}
  \caption{Insertion types.
  Shows characterization of each type based on the pickup and dropoff insertion points $i$ and $j$ of an insertion $\iota=(r,p,d,\nu,i,j)$. 
  Illustrations depict the current route of $\nu$ (solid arrows) with stops $s \in R(\nu)$ as well as the detours to and from $p$ and $d$ (dashed lines).}
  \label{fig:insertion_types}
\end{figure}
The \karri algorithm dynamically accepts requests and finds an insertion for each request that has optimal cost according to the cost function and current system state.

For a request $r$, the algorithm first finds the possible PD-locations in a walking radius $\rho$ around the origin and destination using bounded Dijkstra searches.
Then, the algorithm evaluates all insertions in the order of types illustrated in~\cref{fig:insertion_types}.
For each insertion, \karri computes the cost according to the cost function (see~\cref{subsec:cost_function}).
The insertion with the smallest cost $\iota^\ast$ is repeatedly updated and eventually returned.

Since we consider sets of possible PD-locations, the number of potential insertions becomes the main challenge of the algorithm.
In particular, we face the issue of computing the shortest paths between existing vehicle stops and each PD-location to filter out infeasible insertions and to determine the cost of the remaining candidate insertions.
In the following sections, we describe the bundling and filtering methods that we employ to limit the running time of the required many-to-many shortest path queries for each insertion type.

\section{Ordinary, Ordinary Paired, and Pickup Before Next Stop Insertions}
\label{sec:ordinary_op_and_pbns_insertions}
This section is concerned with \Ord, \OP, and \PBNS insertions (see~\cref{fig:insertion_types}).
In all three insertion types mentioned, both the pickup $p$ and the dropoff $d$ are inserted between two existing stops of the route $R(\nu)$.
To compute the cost of any insertion of one of these types, we need to know the distances between existing vehicle stops and the PD-locations. 
BCH searches with elliptic pruning (see~\cref{subsec:LOUD}) have been shown to efficiently compute these distances~\cite{buchhold2021fast}.
We call these \textit{elliptic BCH searches}.
Additionally, cost calculations for paired insertions require the PD-distance $\distveh(p,d)$.
In this section, we explain how we extend the required distance queries for multiple PD-locations.

\subsection{Elliptic BCH Searches}
\label{subsec:elliptic_bch_searches}
We can extend elliptic BCH queries to multiple PD-locations by simply repeating the queries for each PD-location.
Even with elliptic pruning, this can lead to impractical running times for large numbers of PD-locations, though.
Therefore, we supplement elliptic BCH searches with two techniques for better scalability to larger numbers of PD-locations. 
We describe these techniques only for pickups but they work analogously for dropoffs.

\parheader{Elliptic BCH Searches with Sorted Buckets}
\label{par:elliptic_bch_searches_with_sorted_buckets}
First, we propose using \emph{sorted buckets} to reduce the number of bucket entries scanned by BCH queries.
We explain the principle only for source buckets but it can be analogously applied for target buckets.

Recall that the constraints for existing passengers of a vehicle $\nu$ define a leeway $\lambda(s_i, s_{i+1})$ for the detour between any two consecutive vehicle stops $s_i$ and $s_{i+1}$ of $\nu$ (see~\cref{subsec:LOUD}).
When an elliptic BCH search to a pickup $p$ scans a source bucket entry $(s, \dup(s_i,v)) \in \Bs(v)$, the tentative distance $\dup(s_i,v) + \tddown(v,p)$ can only lead to an insertion that holds all hard constraints if $\dup(s_i,v) + \tddown(v,p) \le \lambda(s_i, s_{i+1})$.
This means the entry is only relevant for $p$ if $\tddown(v,p) \le \lambda(s_i, s_{i+1}) - \dup(s_i,v)$.
We call $\remleeway(s_i, v) \gets \lambda(s_i,s_{i+1}) - \dup(s_i,v)$ the \emph{remaining leeway} of a source bucket entry $(s_i, \dup(s_i,v))$ at vertex $v$.

We sort the entries of each source bucket at each vertex $v$ by their remaining leeway in decreasing order.
Then, an elliptic BCH search to a pickup $p$ can stop scanning the entries at $v$ once an entry $(s, \dup(s,v))$ is scanned for which $\tddown(v,p) > \remleeway(s,v)$.
In this case, we have $\tddown(v,p) > \remleeway(s,v) \ge \remleeway(s',v)$ for any subsequent entry $(s', \dup(s',v))$, so the remaining entries cannot lead to any insertions that adhere to the hard constraints.
Maintaining the order of each bucket comprises a time overhead when inserting bucket entries.
However, since bucket sizes are small, this overhead is limited.
Note that sorted buckets can also be applied in the case of only a single pickup.

\parheader{Bundled Elliptic BCH Searches}
\label{par:bundled_elliptic_bch_searches}
Second, we employ \emph{bundled elliptic BCH searches} that exploit the locality of pickups.

Like any bundled search, a bundled elliptic BCH search is rooted at $k$ pickups and updates $k$ distances with each edge relaxation (see~\cref{par:bundled_searches}).
Additionally, we can bundle bucket entry scans.
Whenever a bucket entry for a stop $s$ is scanned, the search tries to improve upon any of the $k$ tentative distances between $s$ and any of the $k$ pickups.  
We can effectively bundle the edge relaxations and bucket entry scans of elliptic BCH searches because the localized pickups share similar CH search spaces.  
Moreover, we can use vectorized instructions to parallelize both edge relaxations and bucket entry scans.
At the same time, elliptic pruning and sorted buckets can still be applied.
To our knowledge, our algorithm is the first to explicitly use bundled BCH searches. 
The idea follows from the bundled CH searches used in~\cite{buchhold2019real}.

\subsection{PD-Distance Searches}
\label{subsec:pd_distance_searches}
Computing the PD-distances, i.e. the distances between pickups and dropoffs, is a many-to-many shortest path problem where the set of sources and the set of targets are localized.

Our algorithm uses a BCH approach to address this problem.
We generate bucket entries for all dropoffs in their reverse CH search spaces.
Then, we run queries in the upward CH search graph rooted at each pickup to find the PD-distances using the dropoff bucket entries.
We propose two methods to improve these BCH searches.

Firstly, let $\maxPDDist$ be an  an upper bound on all PD-distances.
Then, we only have to generate and scan bucket entries in a radius of $\maxPDDist$. 
We use
\myDisplayMath{
  \maxPDDist \gets \max_{p \in \Prho} \distveh(p, \orig) + \distveh(\orig,\dest) + \max_{d\in \Drho} \distveh(\dest, d)\text{.}
}
We can compute $\distveh(p,\orig)$ for all $p \in \Prho$ and $\distveh(\dest, d)$ for all $d \in \Drho$ using two local Dijkstra searches rooted at $\orig$ and $\dest$, respectively.
We obtain $\distveh(\orig,\dest)$ with a single preliminary CH query. 

Secondly, we can once again use bundled BCH searches.
More specifically, we can generate bucket entries for batches of $k$ dropoffs and then run queries for batches of $k$ pickups where $k$ is a configuration parameter. 
Again, bundled PD-distance searches utilize the locality of pickups and dropoffs and allow us to employ SIMD parallelism.

\subsection{Enumerating Ordinary, Ordinary Paired, and Pickup Before Next Stop Insertions}
\label{subsec:enumerating_ordinary_op_and_pbns_insertions}
After running our elliptic BCH queries and PD-distance searches, we know all distances that are required for ordinary and \OP insertions.
We enumerate the insertions $\iota=(r,p,d,\nu,i,j)$ with $0 < i \le j < \numstopsnu$ for a set of candidate vehicles found by the elliptic BCH queries~\cite{buchhold2021fast}.
We compute the cost $c(\iota)$ for each insertion and update $\iota^\ast$ to $\iota$ if $c(\iota) < c(\iota^\ast)$.

To compute the cost of a \PBNS (PBNS) insertion $\iota=(r,p,d,\nu,0,j)$, we lack knowledge of the distance $\distveh(l_c(\nu), p)$ from the vehicle's current location $l_c(\nu)$ to the pickup $p$.
LOUD suggests filtering PBNS insertions based on a lower bound on the cost of $\iota$ obtained from considering $\distveh(s_0,p)$ instead of $\distveh(s_0,l_c(\nu)) + \distveh(l_c(\nu), p)$ for the pickup detour~\cite{buchhold2021fast}.
This filter discards almost all PBNS insertions. 
For the remaining PBNS insertions, we compute the missing distances $\distveh(l_c(\nu), p)$ with a bucket based approach.
We generate source bucket entries for the current location of every affected vehicle and run bundled queries from the pickups.
The average number of such queries per request is less than $0.5$.

\section{Pickup After Last Stop Insertions}
\label{sec:pickup_after_last_stop_insertions}
In this section, we consider \PALS (PALS) insertions.
The main challenge of PALS insertions is the computation of the distances from last stops to pickups.
The authors of LOUD find that elliptic pruning is not applicable for the computation of these distances~\cite{buchhold2021fast}. 
Instead, LOUD uses a reverse Dijkstra search rooted at $\orig$ that is stopped early when the search can no longer find an insertion better than the best known one.
For multiple pickups, we can compute the required distances by analogously running reverse Dijkstra searches for each pickup. 
These Dijkstra searches may also be bundled to exploit the locality of the pickups.

However, even with a single pickup, this Dijkstra search takes up a significant part of the running time of the LOUD algorithm. 
Thus, for a large number of pickups, we expect infeasible running times.
In this section, we introduce two new BCH based approaches for the computation of last stop distances.
For the rest of this section, let $\cmaxglobal$ denote an upper bound on the best known insertion cost (initially $\cmaxglobal \gets c(\iota^\ast)$).

\parheader{Reformulation of Cost Function for PALS Insertions}
Note that the cost of any PALS insertion $\iota=(r,p,d,\nu,\numstopsnu, \numstopsnu)$ is fully characterized by the pickup $p$, the PD-distance $\distveh(p,d)$, the walking distance $\distpsg(d, \dest)$, the departure time $\tdepmin(\laststopnu)$ of $\nu$ at $\laststopnu$, and the last stop distance $\distveh(\laststopnu,p)$. 
Thus, we can write the cost of $\iota$ as 
\myDisplayMath{
  c(\iota) = c'(r,p,\distveh(p,d), \distpsg(d,\dest), \tdepmin(\laststopnu), \distveh(\laststopnu, p))\text{.}
}

\subsection{Last Stop BCH Searches for PALS}
\label{subsec:last_stop_bch_searches_for_PALS}
Even though elliptic pruning is not applicable, we can still employ a BCH search approach for distances from last stops to pickups.
For this, we maintain a \emph{last stop bucket} $\Blast(v)$ for every $v \in V$.
For every last stop $\laststopnu$, we generate an entry $(\laststopnu, \dup(\laststopnu,v)) \in \Blast(v)$ at each vertex $v$ in the upward CH search space rooted at $\laststopnu$.
Then, for every pickup $p \in \Prho$, we run an \emph{individual (last stop) BCH query} in $\Gdown_p$ that scans the last stop bucket at each settled vertex to compute the shortest path distances from last stops to $p$.
When the search scans an entry $(\laststopnu, \dup(\laststopnu,v)) \in \Blast(v)$, it tries to improve the tentative distance $\tdist(\laststopnu, p)$ with $\dup(\laststopnu, v) + \ddown(v,p)$.
Eventually, the shortest distance $\distveh(\laststopnu, p)$ will be found for every last stop $\laststopnu$.

Whenever the last stop of a vehicle changes, we run two forward CH searches to remove the bucket entries of the old last stop and insert new entries for the new last stop.
We stop the search early when the current distance no longer admits a PALS insertion with cost smaller than the best known cost.
Furthermore, we can bundle the BCH queries and use SIMD parallelism in a similar manner to bundled elliptic BCH searches (see~\cref{par:bundled_elliptic_bch_searches}).

\parheader{Cost Pruning of Bucket Scans using Sorted Buckets}
\label{par:cost_pruning_of_bucket_scans_using_sorted_buckets}
A remaining issue of this approach is the size of the last stop buckets. 
Without elliptic pruning, buckets contain many more entries, especially at vertices that have a high rank in the CH.
Therefore, the queries have to scan large numbers of bucket entries, rendering the last stop BCH approach ineffective.

The future work section of~\cite{buchhold2021fast} suggests sorting the entries within each last stop bucket to address this issue.
For each $v \in V$, we sort the entries in $\Blast(v)$ by their distance $\dup(\laststopnu, v)$ in increasing order.
Suppose a pickup query rooted at $p \in \Prho$ scans the bucket $\Blast(v)$.
Let 
\[ 
  \cmin(x) \gets c'(r,p,\delta_{\text{pd}}^{\min}, 0, \treq(r), x + \ddown(v,p)) \text{\quad where } \delta_{\text{pd}}^{\min} \gets \min_{p \in \Prho, d \in \Drho} \distveh(p,d) \text{.}
\]

Then, for any entry $e=(\laststopnu, \dup(\laststopnu, v)) \in \Blast(v)$, the value $\cmin(\dup(\laststopnu, v))$ is a vehicle-independent lower bound for the cost of any PALS insertion where a vehicle drives a distance of at least $\dup(\laststopnu,v) + \ddown(v,p)$ to $p$.
Note that $\cmin(x)$ is linear in $x$.
Thus, since the entries of $\Blast(v)$ are sorted by $\ddown(\laststopnu, v)$, we can stop the bucket scan if $\cmin(\dup(\laststopnu,v)) > \cmaxglobal$.

\parheader{Updating the Upper Bound Cost}
\label{par:pals_updating_upper_bound}
It is possible to simply use $c(\iota^\ast)$ for the upper bound cost $\cmaxglobal$ needed for cost pruning.
However, we can also improve $\cmaxglobal$ during the search.
Each tentative distance $\tdist(\laststopnu, p)$ found acts as an upper bound on the actual shortest distance $\distveh(\laststopnu,p)$. 
Thus, whenever the tentative distance $\tdist(\laststopnu, p)$ is updated, we can compute an upper bound
\myDisplayMath{ 
  \cmax = c'(r,p,\distveh(p,\dest),0, \tdepmin(\laststopnu),\tdist(\laststopnu, p))
} 
on the cost of the best PALS insertion with $\nu$ and $p$. 
We update $\cmaxglobal$ to $\cmax$ if $\cmax < \cmaxglobal$.
This technique finds inexact cost upper bounds early which is helpful for the stopping criterion of bucket scans.

\subsection{Collective Last Stop Searches for PALS}
\label{subsec:collective_last_stop_searches_for_PALS}
Finally, we propose a search approach based on the idea that we do not actually need to know the distance between every last stop and every pickup.
If we knew the best PALS insertion $\bestPALSins=(r,p,d,\nu,\numstopsnu,\numstopsnu)$ in advance, we would only need to find $\distveh(\laststopnu, p)$.
Obviously, we do not know $\bestPALSins$ in advance but we find that it is possible to prune the distance queries for individual pickups (or actually PD-pairs) by comparing them to each other.
We introduce a collective BCH query that finds the best PALS insertion $\bestPALSins$ as well as the last stop distance $\distveh(\laststopnu, p)$.
In the following, we explain how labels representing PD-pairs are propagated through the CH search graph and how these labels can be pruned based on label domination.

\parheader{Open and Closed Labels}
\label{par:open_and_closed_labels}
A PD-pair label $(p,d,\ddown(v,p))$ at a vertex $v \in V$ consists of the pickup $p \in \Prho$, dropoff $d \in \Drho$ and downwards distance $\ddown(v,p)$. 
At each vertex $v \in V$, there is a set of \emph{open} labels $\open(v)$ and a set of \emph{closed} labels $\closed(v)$.
An open label is a label that has not been settled yet.
For each open label $l=(p,d,\ddown(v,p))$, we store a lower bound $\cmin(l)$ for the cost of a PALS insertion that can be found for $l$ in $\Gdown_v$
\[
  \cmin(l) \gets c'(r,p,\distveh(p,d),\distpsg(d,\dest),\treq(r),\ddown(v,p))
\]

\parheader{Algorithm Outline}
\label{par:collective_pals_algorithm_outline}
\begin{algorithm}[t]
      \caption{Collective BCH search used to find distances from last stops to pickups.}
      \label{alg:collective_bch}
      \begin{algorithmic}[1]
      \Procedure{\texttt{CollectiveBCH}}{$\Prho$, $\Drho$, $\Gdown = (\Vdown,\Edown)$} : $(\bestp, \bestd)$ and $\distveh(\laststopnu, \bestp)$ \label{alg:collective_bch:main_body}
        \State $Q \gets$ PQ of labels with $key_Q(l) = \cmin(l)$; $\open(v) \gets \emptyset$, $\closed(v) \gets \emptyset$ for $v \in V$
        \State $\cmaxglobal \gets c(\iota^\ast)$
        \ForEach{$(p,d) \in \Prho \times \Drho$}
          \State \texttt{insertLabelAtVertex}($p$, $(p,d,0)$)
        \EndFor
        \While{$Q \ne \emptyset$}
          \State $l \gets Q$.\texttt{deleteMin}()
          \If{$\cmin(l) > \cmaxglobal$} \Return \EndIf
          \State \texttt{settleLabel}($l$)
        \EndWhile
        \EndProcedure

        \Statex \phantom{empty line}

        \Procedure{\texttt{settleLabel}}{$l = (p,d,\ddown(v,p))$} \label{alg:collective_bch:settleLabel}
          \State $\open(v)$.\texttt{remove}($l$); $\closed(v)$.\texttt{insert}($l$) \Comment{mark $l$ closed}

          \smallskip

          \ForEach{$e = (\laststop{\nu}, \dup(\laststop{\nu}, v)) \in \Blast(v)$} \Comment{scan last stop entries at $v$ for $l$}
            \If{$\cmin(l,e) > \cmaxglobal$} \textbf{break} \EndIf
            \If{$\cmax(l,e) < \cmaxglobal$}
              \State $(\bestp, \bestd) \gets (p,d)$; $\cmaxglobal \gets \cmax(l,e)$
            \EndIf
          \EndFor

          \smallskip

          \ForEach{$(u,v) \in \Edown$} \Comment{propagate $l$ to neighbors of $v$}
            \State \texttt{insertLabelAtVertex}($u$, $(p,d,\ell^+(u,v) + \ddown(v,p))$)
          \EndFor

        \EndProcedure

        \Statex \phantom{empty line}

        \Procedure{\texttt{insertLabelAtVertex}}{$v$, $l' = (p,d,\ddown(v,p))$} \label{alg:collective_bch:insertLabelAtVertex}
          \If {$\cmin(l') > \cmaxglobal$} \Return \EndIf
          \ForEach{$l \in \open(v) \cup \closed(v)$}
            \If{$l$ dominates $l'$} \Return \EndIf
          \EndFor
          \ForEach{$l \in \open(v)$}
            \If{$l'$ dominates $l$} $\open(v)$.\texttt{remove}($l$) \EndIf
          \EndFor
          \State $\open(v)$.\texttt{insert}($l'$); $Q$.\texttt{insert}($l'$)
        \EndProcedure
      \end{algorithmic}
\end{algorithm}
We give pseudocode for a collective BCH search in~\cref{alg:collective_bch}.
Our search maintains a priority queue $Q$ that contains all open labels ordered increasingly by $\cmin$.
Initially, at each pickup $p \in \Prho$, an open label $(p,d,0) \in \open(p)$ is created for each $d \in \Drho$.
As long as $Q$ contains a label $l$ with $\cmin(l) \le \cmaxglobal$ for a known upper bound $\cmaxglobal$ on the cost of any insertion, our search proceeds with a next step.
In each step of the search, the label $l \gets \min(Q)$ is removed from $Q$ and settled as described in the following.

\parheader{Settling Open Labels}
\label{par:settling_open_labels_for_collective_PALS}
Settling an open label $l = (p,d,\ddown(v,p))$ consists of three steps:
First, we mark $l$ closed at $v$, i.e. we move $l$ from $\open(v)$ to $\closed(v)$.
Second, we search for a new best insertion by traversing all entries in the last stop bucket $\Blast(v)$.
For each entry $e = (\laststop{\nu}, \dup(\laststop{\nu}, v)) \in \Blast(v)$, we compute an upper bound cost
\[
  \cmax(l,e) \gets c'(r,p,\distveh(p,d),\distpsg(d,\dest),\tdepmin(\laststopnu),\dup(\laststopnu, v) + \ddown(v,p)) \text{.} 
\]
If $\cmax(l,e) < \cmaxglobal$, we mark $\iota=(r,p,d,\nu,\numstopsnu,\numstopsnu)$ as the best known PALS insertion, store the tentative distance $\tdist(\laststopnu, p) = \dup(\laststopnu, v) + \ddown(v, p)$, and update $\cmaxglobal \gets \cmax(l,e)$. 
Note that $\cmax(l,e)$ is the exact cost of the PALS insertion $\iota=(r,p,d,\nu,\numstopsnu,\numstopsnu)$ if $\tdist(\laststopnu, p)$ is a shortest path distance.
Since the BCH search finds shortest up-down paths, we will thus eventually find the best PALS insertion.
As before, we can stop each bucket scan early.
For this purpose, we compute a vehicle-independent cost lower bound $\cmin(l,e)$ s.t. we can stop the search early if $\cmin(l,e) > \cmaxglobal$ using
\[
  \cmin(l,e) \gets c'(r,p,\distveh(p,d), \distpsg(d,\dest), \treq(r), \dup(\laststopnu, v) + \ddown(v,p)) \text{.}
\]
Third, we propagate $l$ to all neighboring vertices of $v$.
For each neighboring vertex $w \in V$ with $(w,v) \in \Gdown$, we create a new open label $l' = (p,d,\ell^+(w,v) + \ddown(v,p))$ at $w$.
Here, we employ cost pruning by discarding $l'$ if the lower bound cost $\cmin(l')$ for this PD-pair and this distance exceeds $\cmaxglobal$.
Furthermore, we may be able prune $l'$ at $v$ if it is dominated by an existing label at $v$ as described in the following.

\parheader{Domination Pruning}
\label{par:domination_pruning_for collective_PALS}
Propagating a label through the entire search space for every PD-pair is too expensive.
However, we find that we can compare labels at the same vertex and prune dominated labels in a technique we call \textit{domination pruning}.
Intuitively, a label $l$ dominates a label $l'$ at a vertex $v$ if we know that any insertion found in $\Gdown_v$ that uses $l'$ has higher costs than the equivalent insertion using $l$.

To formalize this, we first define an upper bound for the cost of a PALS insertion found in $\Gdown_v$ for a label $l$.
Let $l=(p,d,\ddown(v,p))$ be a label at $v$ and $e = (\laststop{\nu}, \dup(\laststop{\nu}, w)) \in \Blast(w)$ a last stop bucket entry at a vertex $w \in \Vdown_v$.
Then, we define an upper bound for the cost of an insertion where $\nu$ drives from $\laststopnu$ to $p$ via $w$ and $v$ as
\myDisplayMath{
  \cmax(l,v,e) \gets c'(r,p,\distveh(p,d),\distpsg(d,\dest),\tdepmin(\laststopnu), \dup(\laststopnu, w) + \ddown(w,v) + \ddown(v, p))
}

With this, we can formally define the domination relation between labels:

\begin{definition} \label{def:pd_pair_domination}
  A PD-pair label $l$ \emph{dominates} another label $l'$ at a vertex $v \in V$ exactly if $\cmax(l,v,e) < \cmax(l',v,e)$ for every $w \in \Vdown_v$ and $e \in \Blast(w)$.
\end{definition}

\begin{theorem}
If a label $l$ dominates another label $l'$ at $v$, we do not need to settle $l'$ at $v$.
\end{theorem}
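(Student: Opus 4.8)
The plan is to establish the correctness statement that is actually needed — that the collective search of \cref{alg:collective_bch} still returns a best PALS insertion $(\bestp,\bestd)$ together with the correct last-stop distance when every dominated label is discarded — from which ``we do not need to settle $l'$ at $v$'' follows. I would argue by contradiction on a fixed optimal PALS insertion. Assume w.l.o.g.\ that some PALS insertion is cheaper than the incumbent $\iota^{\ast}$ (otherwise $\cmaxglobal$ is never updated and the claim is trivial); let $\bestPALSins = (r,p^{\ast},d^{\ast},\nu^{\ast},\numstops{\nu^{\ast}},\numstops{\nu^{\ast}})$ be a cheapest PALS insertion and $c^{\ast} := c(\bestPALSins)$. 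Then $c^{\ast}$ is a lower bound on the cost of \emph{every} PALS insertion, and the search keeps $\cmaxglobal \ge c^{\ast}$ because each update of $\cmaxglobal$ is to the cost of a genuine PALS insertion. The structural fact I would rely on is that settling a label $l=(p,d,\ddown(v,p))$ at $v$ and following all of its propagation can only evaluate PALS costs of the form $\cmax(l,v,e)$ for $w\in\Vdown_v$ and $e=(\laststop{\nu},\dup(\laststop{\nu},w))\in\Blast(w)$: a descendant of $l$ that reaches $w$ with shortest accumulated down-distance carries $\ddown(w,v)+\ddown(v,p)$, so scanning $e$ evaluates exactly $\cmax(l,v,e)$, and non-shortest arrivals only yield larger values by monotonicity of $c'$ in the last-stop distance. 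Furthermore $\cmax(l,v,e)$ is an upper bound on the cost of the PALS insertion with pickup $p$, dropoff $d$ and vehicle $\nu$ (it overestimates the last-stop distance), hence $\cmax(l,v,e)\ge c^{\ast}$.

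Next I would track the ``critical chain''. Since $\bestPALSins$ is realised by a shortest up-down path $\laststop{\nu^{\ast}}\to\dots\to w^{\ast}\to\dots\to p^{\ast}$ with meeting vertex $w^{\ast}$, all of its subpaths are shortest, so for the $(p^{\ast},d^{\ast})$-label $l^{\ast}_v=(p^{\ast},d^{\ast},\ddown(v,p^{\ast}))$ at each vertex $v$ on the downward part and for $e^{\ast}=(\laststop{\nu^{\ast}},\dup(\laststop{\nu^{\ast}},w^{\ast}))\in\Blast(w^{\ast})$ we get $\cmax(l^{\ast}_v,v,e^{\ast}) = c'(r,p^{\ast},\distveh(p^{\ast},d^{\ast}),\distpsg(d^{\ast},\dest),\tdepmin(\laststop{\nu^{\ast}}),\distveh(\laststop{\nu^{\ast}},p^{\ast})) = c^{\ast}$. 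If some label $l$ domination-prunes $l^{\ast}_v$ at $v$ (on insertion, or by later removal from $\open(v)$), then $w^{\ast}\in\Vdown_v$ and $e^{\ast}\in\Blast(w^{\ast})$ together with \cref{def:pd_pair_domination} give $\cmax(l,v,e^{\ast}) < \cmax(l^{\ast}_v,v,e^{\ast}) = c^{\ast}$, contradicting $\cmax(l,v,e^{\ast})\ge c^{\ast}$; so the chain is never domination-pruned. It is never cost-pruned either, because whenever $l^{\ast}_v$ is inserted into or extracted from $Q$ we have $\cmin(l^{\ast}_v) \le \cmax(l^{\ast}_v,v,e^{\ast}) = c^{\ast} \le \cmaxglobal$, using $\treq(r)\le\tdepmin(\laststop{\nu^{\ast}})$ and monotonicity of $c'$. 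Lastly, the main loop cannot terminate before $l^{\ast}_{w^{\ast}}$ is settled: termination means the extracted label has $\cmin>\cmaxglobal$, hence so does every label still in $Q$ (it is ordered by $\cmin$), which excludes $l^{\ast}_{w^{\ast}}$ — and, inductively, the last already-created chain label — from $Q$; but such a label is then either closed, having already propagated the next chain label and scanned its bucket, or never created, which contradicts the steps above. Hence $l^{\ast}_{w^{\ast}}$ is settled, its scan of $e^{\ast}$ evaluates $c^{\ast}$, and the optimal PALS insertion is recorded; discarding a dominated label — in particular $l'$ at $v$ — therefore changes nothing.

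The hard part is exactly this global bookkeeping. The ``local'' half — that every cost settling $l'$ could contribute is strictly beaten by one that $l$ contributes — is immediate from \cref{def:pd_pair_domination}, but turning it into ``$l'$ need not be settled'' requires showing that the replacing label $l$, or the entire chain that supersedes $l'$, is not itself lost to cost pruning or early termination; this is why the argument must be pinned to a fixed optimal insertion and why it leans on $\tdepmin(\laststop{\nu})\ge\treq(r)$ and the monotonicity of $c'$ in its departure-time and last-stop-distance arguments. Two smaller points I would still check: several labels for the same PD-pair can coexist at one vertex with different stored distances, so I must confirm that the one carrying the shortest distance dominates (and thus outlives) the others and drives the chain; and the degenerate case $\Blast(w)=\emptyset$ for all $w\in\Vdown_v$, where \cref{def:pd_pair_domination} holds vacuously and settling $l'$ is plainly pointless, is consistent with the claim.
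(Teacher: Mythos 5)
Your argument is correct, and its core coincides with the paper's: fix a best PALS insertion realized through $l'$, take the meeting vertex $w$ of its shortest up--down path with bucket entry $e\in\Blast(w)$, observe that $\cmax(l',v,e)$ is then the \emph{exact} optimal cost, and let \cref{def:pd_pair_domination} produce a strictly cheaper insertion via $l$ --- a contradiction. Where you genuinely diverge is in scope. The paper proves only this local statement (discarding $l'$ at $v$ cannot be the reason the optimum is missed) and stops there; you instead embed it in a full correctness proof of the collective search, additionally arguing that the surviving ``critical chain'' of $(p^\ast,d^\ast)$-labels is never cost-pruned (via $\cmin(l^\ast_v)\le c^\ast\le\cmaxglobal$) and that the main loop cannot terminate before the critical label is settled. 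That buys a stronger end-to-end guarantee the paper never states, at the price of extra obligations the paper's local proof avoids --- most notably the loose end you flag yourself: because the cost is flat in the last-stop distance while the vehicle would arrive before the passenger, a same-PD-pair label with strictly smaller $\ddown(v,p)$ need not \emph{strictly} dominate one with a larger distance under \cref{def:pd_pair_domination}, so ``the shortest-distance label drives the chain'' needs a separate (easy, but not automatic) argument that the chain label is actually created and settled rather than merely not dominated. For the theorem as stated, your local half alone suffices and matches the paper; the global bookkeeping is sound in outline but is extra work, and that one point about non-strict domination between same-pair labels is the only place where it is not yet airtight.
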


\begin{proof} \label{proof:pals_domination}
This can be shown by contradiction.
Assume $l=(p,d,\ddown(v,p))$ dominates $l'=(p',d',\ddown(v,p'))$ at $v$.
Further, assume that $\iota=(r,p',d',\nu,\numstopsnu,\numstopsnu)$ is the best PALS insertion.
Let $\pi$ be a shortest path from $\laststop{\nu}$ to $p'$.
Wlog. $\pi$ is an up-down-path in the CH consisting of an upwards prefix $\pi^\uparrow$ and a downwards suffix $\pi^\downarrow$.
If $\pi^\downarrow$ does not contain $v$, then the collective search will not find $\pi$ in $\Gdown_v$, and we do not have to settle $l'$ at $v$.

Otherwise, $\pi^\downarrow = (w, \dots, v, \dots, p')$ with $w \in \Vdown_v$.
Let $e = (\laststop{\nu}, \dup(\laststop{\nu}, w)) \in \Blast(w)$.
Since $\pi$ is a shortest path, we know that
\begin{align*}
  \cmax(l',v,e) =& c'(r,p',\distveh(p',d'),\distpsg(d',\dest),\tdepmin(\laststopnu), \distveh(\laststopnu, p')) \\
  =& c((r,p',d',\nu,\numstopsnu, \numstopsnu)) \text{.}
\end{align*}
However, $l$ dominates $l'$ which means that
\myDisplayMath{
  c((r,p,d,\nu,\numstopsnu, \numstopsnu)) \le \cmax(l,v,e) < \cmax(l',v,e) = c((r,p',d',\nu,\numstopsnu, \numstopsnu))
}

This contradicts $\iota$ being the best PALS insertion. 
Hence, we do not have to settle label $l'$ at $v$ to find the best pair for $\nu$.
\end{proof}

\parheader{Efficiently Computing the Domination Relation}
\label{par:efficiently_computing_pals_domination}
We find that it is not trivial to compute the domination relation efficiently because of the non-linearity of the cost function.

Consider two labels $l_i=(p_i,d_i,\ddown(v,p_i))$ for $i=1,2$ and two PALS insertions $\iota_i=(r,p_i,d_i,\nu,\numstopsnu,\numstopsnu)$ found for these labels in $\Gdown_v$.
Assume the vehicle $\nu$ arrives at $v$ at some time $t = \tdepmin(\laststopnu) + \distveh(\laststopnu, v)$.
If we know $t$, we can determine the cost difference $c(\iota_1) - c(\iota_2)$ irrespective of the actual vehicle $\nu$.
In other words, the difference $\Delta_c(l_1, l_2, t)$ in costs between any insertions that can be found for $l_1$ and $l_2$ in $\Gdown_v$ is a function of $t$.
Then, if $\Delta_c(l_1, l_2, t) < 0$ for all $t \ge 0$, every insertion found in $\Gdown_v$ will be better with $l_1$ than with $l_2$, i.e. $l_1$ dominates $l_2$ .

If the cost function for PALS insertions were to grow linearly with $t$, then $\Delta_c(l_1,l_2,t)$ would be constant wrt. $t$.
In that case, we could simply check for domination using $\Delta_c(l_1,l_2,0) < 0$. 
However, the cost function does not increase linearly with $t$: 
Firstly, the cost is constant wrt. $t$ as long as the vehicle arrives at $p_i$ earlier than the passenger, i.e. $t + \ddown(v,p_i) + \tstopmin \le \treq(r) + \distpsg(\orig, p_i)$ (see~\cref{subsec:vehicles_waiting_for_passengers}).
Secondly, due to the wait time and trip time soft constraints, linear penalty terms are added to the cost function starting at a certain threshold for the wait time and trip time, both of which $t$ contributes to.

The passenger arrival time at $p_i$ and the thresholds for soft constraint penalties differ between labels.
Thus, $\Delta_c(l_1,l_2,t)$ varies with $t$.   
Since we do not know which values of $t$ are possible for insertions found in $\Gdown_v$, we cannot trivially determine whether $l_1$ dominates $l_2$.

\parheader{Approximating the Domination Relation}
\label{par:approximating_pals_domination}
Instead, we under-approximate the domination relation by computing a sufficient precondition.
For this purpose, we find an upper bound $\deltacmax(l_1, l_2) \ge \max_{t \ge 0} \Delta_c(l_1,l_2,t)$ on the difference in insertion costs between any insertion that can be found for $l_1$ and $l_2$ in $\Gdown_v$.
Then, $l_1$ dominates $l_2$ if $\deltacmax(l_1,l_2) < 0$. 

We now explain how to find this upper bound:
The maximum difference in the departure times at $p_1$ and $p_2$ is $\tdepmax(l_1) - \tdepmin(l_2)$ with
\begin{align*}
    \tdepmax(l_1) &\gets \max \{ \ddown(v,p_1) + \tstopmin, \distpsg(p_1, \orig)\} \\
    \tdepmin(l_2) &\gets \ddown(v,p_2) + \tstopmin
\end{align*}
Then, for every insertion found in $\Gdown_v$, the difference in detours between $l_1$ and $l_2$ is bounded by $\deltadetour(l_1,l_2) \gets \tdepmax(l_1) + \distveh(p_1,d_1) - \tdepmin(l_2) - \distveh(p_2,d_2)$.
For the difference in trip times, we first define:
\begin{align*}
    \tarrmax(l_1) &\gets \tdepmax(l_1) + \distveh(p_1,d_1) + \distpsg(d_1, \dest) \\
    \tarrmin(l_2) &\gets \tdepmin(l_2) + \distveh(p_2,d_2) + \distpsg(d_2,\dest)
\end{align*}
Then, the difference in trip times is bounded by $\deltattrip(l_1,l_2) \gets \tarrmax(l_1) - \tarrmin(l_2)$.
We define upper bounds for the difference in penalties for the wait and trip time soft constraints as
\begin{align*}
    \deltawaitvio(l_1,l_2) &\gets \gammawait \max \{ \tdepmax(l_1) - \tdepmin(l_2), 0 \} \\
    \deltatripvio(l_1,l_2) &\gets \gammatrip \max \{ \deltattrip(l_1,l_2), 0 \}
\end{align*}
Note that the differences in detours and trip times are allowed to be negative to express a cost advantage for $l_1$ but the differences in penalties are not.
Even if $\tdepmax(l_1) < \tdepmin(l_2)$ or $\deltattrip(l_1,l_2) < 0$, we may find insertions in $\Gdown_v$ where no penalties apply for either label.
In those cases, the penalty difference has to be zero.
Let $\Delta_{\textit{walk}} (l_1,l_2)$ be the fix difference in walking costs.
Putting it all together, we get
\[ \deltacmax(l_1,l_2) \gets \deltadetour(l_1,l_2) + \tripweight \deltattrip(l_1,l_2) + \walkweight \Delta_{\textit{walk}}(l_1,l_2) + \deltawaitvio(l_1,l_2) + \deltatripvio(l_1,l_2) \]

We can compute $\deltacmax(l_1,l_2)$ in constant time with information that is known at $v$ without looking ahead in the search tree.
Since we under-approximate domination, it is possible that $l_1$ actually dominates $l_2$ but our condition does not hold.
However, we find that our domination criterion still manages to prune the vast majority of labels early.

\parheader{Limitations}
\label{par:limitations_of_collective_pals}
We remark that the insertion $\iota$ found by the collective search is only guaranteed to be the best possible PALS insertion if $\iota$ holds the service time hard constraint.
Since our search ignores the service time constraint, it may return an insertion that breaks the constraint even if there are other PALS insertions that do not.

Therefore, if $\iota$ breaks the service time constraint, we fall back to computing the distances from every last stop to every pickup using individual last stop BCH searches. 
The fallback individual BCH searches can make use of the good cost upper bounds found during the collective search.
We find that this is only necessary in exceedingly rare cases.

\section{Dropoff After Last Stop Insertions}
\label{sec:dropoff_after_last_stop_insertions}
The final insertion type are \DALS (DALS) insertions (see~\cref{fig:insertion_types}).
Similarly to PALS insertions, we face the problem of not knowing the distances from the vehicles' last stops to dropoffs.
We can use the same approaches of Dijkstra searches as well as individual and collective BCH searches to find these distances with some minor differences.

Firstly, cost pruning is less effective than in the PALS case since the lower bounds on costs cannot include the PD-distance. 
Secondly, we cannot update the global cost upper bound $\cmaxglobal$ during the searches in the DALS case as we lack information about the cost of inserting the pickup earlier in the route.
Thirdly, collective BCH searches have some more intricate differences between the PALS and DALS cases. 
We go into more detail about these differences in the rest of this section.

\parheader{Pareto-Best Dropoffs}
\label{par:pareto_best_dropoffs}
\begin{figure}[t!]
  \centering
  \includegraphics[width=0.75\textwidth]{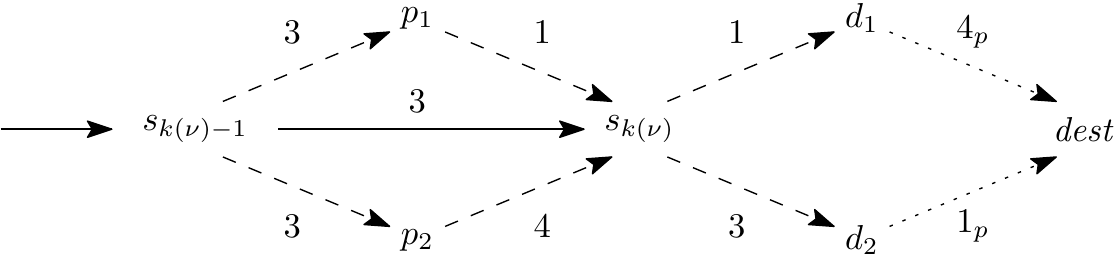}
  \caption{Example \DALS insertions that illustrate why domination between dropoffs is a partial relation.
  Shows existing vehicle route (solid) with current last two stops ($s_{\numstopsnu - 1}, \laststopnu$), two pickups between these stops ($p_1, p_2$) and two dropoffs after the last stop ($d_1, d_2$) with necessary detours (dashed), as well as the walking distances from either dropoff to the destination (dotted).
  Edges are annotated with vehicle travel times or passenger travel times (marked with $p$).}
  \label{fig:partial_dropoff_domination}
\end{figure}

The largest difference of collective searches in the DALS case compared to the PALS case is the goal of the search.
Whereas collective PALS searches always find a single best PD-pair, we cannot find a single best dropoff.
Instead, we find a set of pareto-best dropoffs per vehicle. 

To see why there cannot always be a single best dropoff for a vehicle $\nu \in F$, assume that we would like to decide whether $d_1 \in \Drho$ always leads to better DALS insertions than $d_2 \in \Drho$.
Dropoff $d_1$ is always better than $d_2$ for $\nu$ if $c(\iota_1) < c(\iota_2)$ for any DALS insertions $\iota_1=(r,p,d_1,\nu,i,\numstopsnu)$ and $\iota_2=(r,p,d_2,\nu,i,\numstopsnu)$ for any pickup $p \in \Prho$ and $0 \le i < \numstopsnu$.

Consider the cost difference $c(\iota_1) - c(\iota_2)$.
The vehicle detour and trip time is the same for both insertions up to $\laststopnu$.
Thus, the differences between $\iota_1$ and $\iota_2$ in the detours, trip times, added trip times for existing passengers, walking times, and penalties for the wait time constraint are all entirely independent of the choice of $p$ and $i$. 
However, for different choices of $p$ and $i$, either insertion may or may not violate the trip time soft constraint.
Thus, it is possible that $\iota_1$ is better for some choice of $p$ and $i$ while $\iota_2$ may better for another.

To illustrate this issue, consider the example depicted in~\cref{fig:partial_dropoff_domination}.
Suppose $\tripweight = 1$ and $\gammatrip = 10$.
Further, let the maximum trip time $\tarrmax(r) - \treq(r) = 10$.
Let $\distpsg(\orig, p_1) < 3$ as well as $\distpsg(\orig, p_2) < 3$.

Consider the insertions $\iota_{1i}=(r,p_1,d_i,\nu,\numstopsnu-1,\numstopsnu)$ for $i=1,2$.
Neither insertion violates the trip time soft constraint as $\ttrip(\iota_{11}) = 3 + 1 + 1 + 4 = 9$ and $\ttrip(\iota_{12}) = 3 + 1 + 3 + 1 = 8$.
The difference in detours is $\tdetour(\iota_{11}) - \tdetour(\iota_{12}) = 5 - 7 = -2$, leading to a difference in costs of $c(\iota_{11}) - c(\iota_{12}) = -1$.
Therefore, $d_1$ is the better choice for $p_1$.

Now, consider the insertions $\iota_{2i}=(r,p_2,d_i,\nu,\numstopsnu-1,\numstopsnu)$ for $i=1,2$.
Here, both insertions violate the trip time soft constraint since $\ttrip(\iota_{21}) = 12$ and $\ttrip(\iota_{22}) = 11$.
The difference in detours and trip times add up like before but with the added difference in penalties, we get $c(\iota_{21}) - c(\iota_{22}) = -1 + 10 = 9$.
Thus, $d_2$ is the better choice for $p_2$.

This demonstrates that we may not be able to choose a single best dropoff for DALS insertions. 
Instead, we only know that a dropoff $d_1$ always leads to better insertions than a dropoff $d_2$ if $d_1$ is better for DALS insertions with and without a trip time violation.
Thus, we can only ever obtain a set of pareto-best dropoffs for each vehicle.

\parheader{Algorithm Outline and Partial Domination}
\label{par:collective_DALS_algorithm_outline_and_partial_domination}
Our collective search maintains open and closed dropoff labels of the form $(d, \ddown(v,d))$ at each vertex $v \in V$ with $d \in \Drho$.
Initially, an open label $(d, 0)$ is created at every $d \in \Drho$.
As in the PALS case, we associate a lower bound cost $\cmin(l)$ with each open label $l$.
In each step, the open label $l$ with the smallest lower bound cost is settled by closing the label at its vertex $v$, scanning the last stop bucket at $v$, and propagating $l$ to neighboring vertices in $\Gdown_v$.

We use our cost based stopping criteria for bucket scans and the entire search.
When a label is propagated to a new vertex, we apply domination pruning.
For the domination relation, we test whether a dropoff will always be better than another dropoff with and without trip time penalties: 
\begin{definition}
  Let $l_i=(d_i,\ddown(v,d_i))$ for $i=1,2$ be two dropoff labels at $v \in V$. 
  Let 
  \begin{align*}
    \deltadetour(l_1,l_2) &\gets \ddown(v,d_1) - \ddown(v,d_2) \\
    \deltattrip(l_1,l_2) &\gets \ddown(v,d_1) + \distpsg(d_1, \dest) - \ddown(v,d_1) - \distpsg(d_2, \dest)
  \end{align*}
  Then $d_1$ \emph{dominates} $d_2$ if 
  \begin{enumerate}
    \item $\deltadetour(l_1,l_2) + \tripweight \deltattrip(l_1,l_2) < 0$ and
    \item $\deltadetour(l_1,l_2) + (\tripweight + \gammatrip) \deltattrip(l_1,l_2) < 0$
  \end{enumerate}
\end{definition}

\section{Experimental Evaluation}
\label{sec:experimental_evaluation}
\begin{table}[t]
  \centering
  \caption{Key figures of our benchmark instances.}
  \begin{tabular}{|l|R|R|R|R|}
    \hline
    Instance & \makecell[c]{|V|} & \makecell[c]{|E|} & \makecell[c]{\#\text{veh.}} & \makecell[c]{\#\text{req.}} \\
    \hline
    \BerlinOne & 73689 & 159039 & 1000  & 16569 \\
    \hline
    \BerlinTen & 73689 & 159039 & 10000 & 149185 \\
    \hline
    \RuhrOne & 394049 & 840587 & 3000 & 49707 \\
    \hline
    \RuhrTen & 394049 & 840587 & 30000 & 447555 \\
    \hline
  \end{tabular}
  \label{tab:key_figures_of_instances}
\end{table}
\begin{table}
  \centering
  \caption{Average number of pickups ($\numpickups$) and dropoffs ($\numdropoffs$) for specific values of the walking radius $\rho$ on the \BerlinOne, \BerlinTen, \RuhrOne, and \RuhrTen instances.}
  \begin{tabular}{|r|r|r|r|r|r|r|r|r|}
    \cline{2-9}
    \multicolumn{1}{c|}{} & \multicolumn{2}{c|}{\ShortBerlinOne} & \multicolumn{2}{c|}{\ShortBerlinTen} & \multicolumn{2}{c|}{\ShortRuhrOne} & \multicolumn{2}{c|}{\ShortRuhrTen} \\
    \hline
    \makecell[c]{$\rho$} & \makecell[c]{$\numpickups$} & \makecell[c]{$\numdropoffs$}  & \makecell[c]{$\numpickups$} & \makecell[c]{$\numdropoffs$} & \makecell[c]{$\numpickups$} & \makecell[c]{$\numdropoffs$}  & \makecell[c]{$\numpickups$} & \makecell[c]{$\numdropoffs$} \\
    \hline
    $0\s$ & 1 & 1 & 1 & 1 & 1 & 1 & 1 & 1 \\
    \hline
    $150\s$ & 10 & 10 & 10 & 10 & 12 & 11 & 12 & 11 \\
    \hline
    $300\s$ & 33 & 31 & 33 & 31 & 35 & 33 & 35 & 33 \\
    \hline
    $450\s$ & 68 & 66 & 68 & 67 & 70 & 68 & 70 & 68 \\
    \hline
    $600\s$ & 115 & 113 & 115 & 114 & 114 & 112 & 114 & 112 \\
    \hline
  \end{tabular}
  \label{tab:num_pickups_dropoffs_with_radius}
\end{table}
Our source code is written in C++17 and compiled with GCC 9.4 using \texttt{-O3}.
We check the correctness of our implementation with a rigorous use of assertions (disabled for experiments). 
We conduct our experiments on a machine running Ubuntu 20.04 with 512 GiB of memory and two 16-core Intel Xeon E5-2670 v3 processors at 2.3GHz.
We use 32-bit distance labels and the AVX2 SIMD instruction set with 256-bit registers to compute up to $8$ operations in one vector instruction.

We evaluate our implementation on the \BerlinOne (\ShortBerlinOne), \BerlinTen (\ShortBerlinTen), \RuhrOne (\ShortRuhrOne), and \RuhrTen (\ShortRuhrTen) request sets~\cite{buchhold2021fast} that respectively represent 1\% and 10\% of ridesharing demand in the Berlin and Rhein-Ruhr metropolitan areas on a weekday. 
The request sets for Berlin were artificially generated based on the Open Berlin Scenario~\cite{ziemke2019matsim} for the MATSim transport simulation~\cite{horni2016multi}.
The request sets for the Rhein-Ruhr area were obtained by scaling up the Berlin request sets~\cite{buchhold2021fast}.
The underlying road networks are obtained from OpenStreetMap data\footnote{\url{https://download.geofabrik.de/}}.
We use the known speed limit of each road to determine the travel time of the according edge in the vehicle network.
For the passenger network, we assume a constant walking speed of $4.5$km/h. 
We show the sizes of the networks and request sets in~\cref{tab:key_figures_of_instances}.
We scale the number of pickups $\numpickups$ and dropoffs $\numdropoffs$ by using increasing walking radii $\rho \in \{0\s, 150\s, 300\s, 450\s, 600\s\}$ which lead to the numbers of PD-locations given in~\cref{tab:num_pickups_dropoffs_with_radius}. 
Unless stated otherwise, we run $5$ iterations per combination of algorithm configuration, input, and radius $\rho$ and report the average running times. 

For our model parameters, we choose $\twaitmax = 600\s$, $\tstopmin = 60\s$, $\gammawait = 1$, $\gammatrip = 10$, $\tripweight=1$, and $\walkweight=0$.
We use the open-source library RoutingKit\footnote{\url{https://github.com/RoutingKit/RoutingKit}} to compute the required CHs of the road networks which takes only a few seconds for our instances.
This pre-processing step takes only a few seconds for each graph on any of our input instances.

\subsection{Bundled Searches}
\label{subsec:bundled_searches_experiments}
\begin{figure}[t]
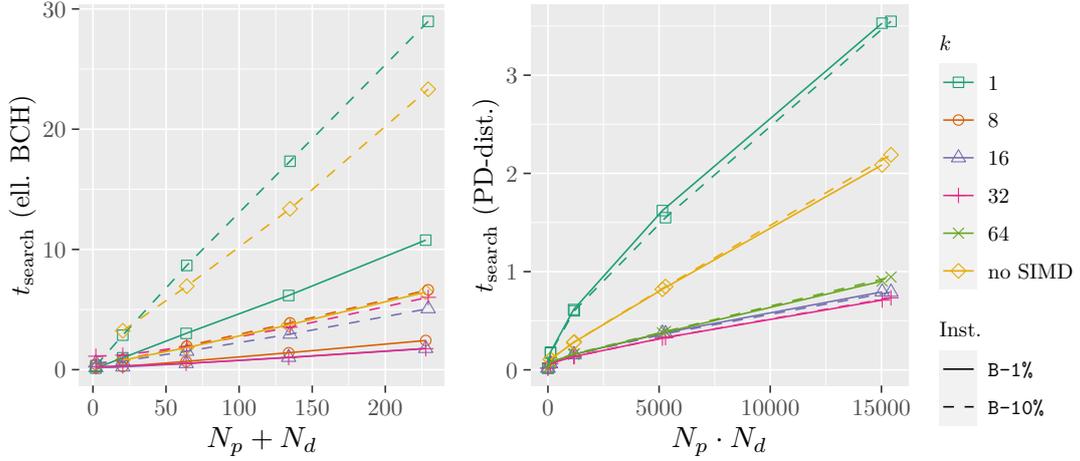

  \centering
  \includeplot{reg_bch_and_pd_bundled_eval}
  \caption{Evaluation of bundled elliptic BCH searches (left) and PD-distance searches (right).
  Shows mean search times (in \ms) for either search on the \BerlinOne and \BerlinTen instances with $\rho \in \{0\s, 150\s, 300\s, 450\s, 600\s\}$.
  Considers $k \in \{1,8,16,32\}$ for elliptic BCH searches and $k \in \{1,16,32,64\}$ for PD-distance searches, using SIMD instructions for $k > 1$.
  Additionally shows running times without SIMD instructions for elliptic searches with $k=16$ and PD-distance searches with $k=32$.
  Note the different $x$- and $y$-axes.}
  \label{fig:ell_bch_and_pd_bundled_eval_plots}
\end{figure}
\begin{figure}[t]
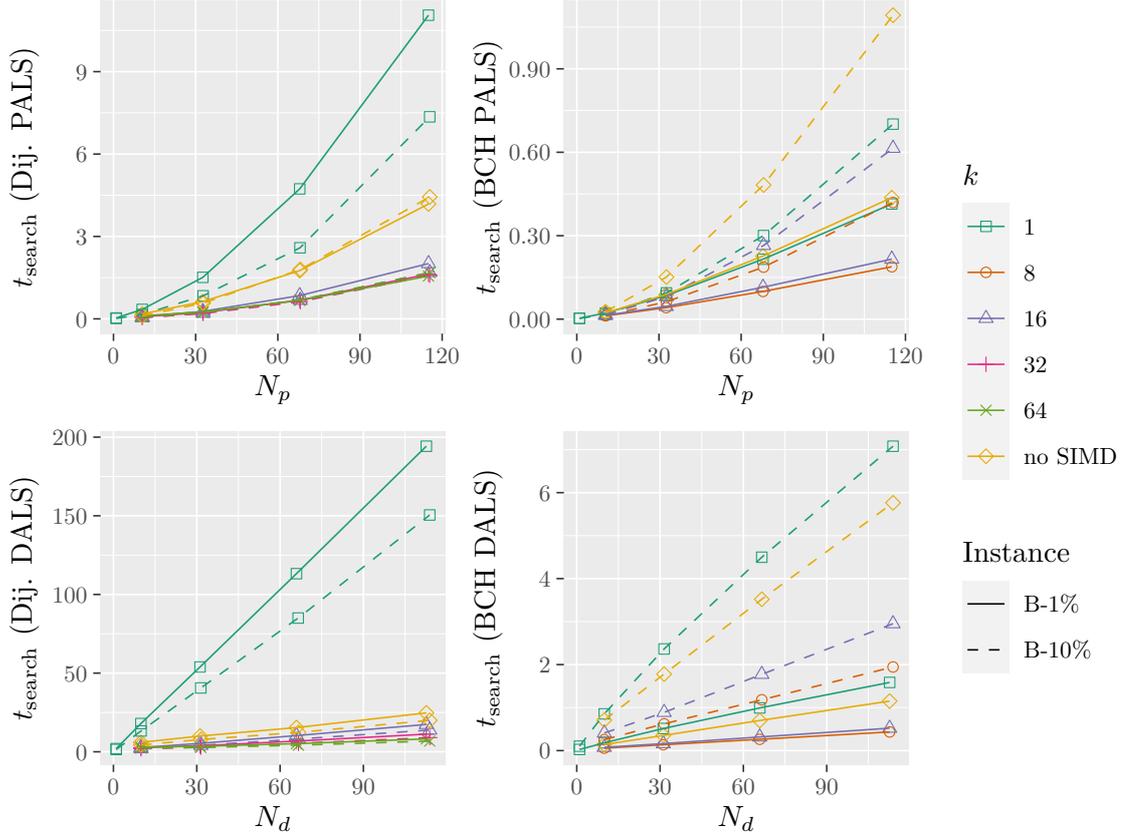

  \centering
  \includeplot{last_stop_bundled_eval}
  \caption{Evaluation of bundled last stop searches.
  Shows mean search times (in \ms) for bundled Dijkstra searches (left) and individual BCH searches (right) in the PALS (top) and DALS (bottom) cases on the \BerlinOne and \BerlinTen instances with $\rho \in \{0\s, 150\s, 300\s, 450\s, 600\s\}$.
  Considers $k \in \{1,16,32,64\}$ for Dijkstra searches and $k \in \{1,8,16\}$ for BCH searches, using SIMD instructions for $k > 1$.
  Additionally shows running times without SIMD instructions for Dijkstra searches with $k=64$ and BCH searches with $k=8$.
  Note the different $y$-axes.}
  \label{fig:last_stop_bundled_eval_plots}
\end{figure}
In this section, we experimentally evaluate bundled searches in each of the described applications and find the optimal value of $k$ for each of them.
We conduct our experiments on the \BerlinOne and \BerlinTen instances with $\rho \in \{0\s, 150\s,300\s,450\s,600\s\}$.
Due to time constraints, we first find the optimal value of $k$ with vector instructions and then evaluate the running time without vector instructions only for that value of $k$.   
We remark, though, that the optimal values of $k$ may differ depending on whether vector instructions are used at all and which SIMD instruction set is available.
We only run a single iteration of Dijkstra searches because of their prohibitive running times.

\parheader{Bundled Elliptic BCH Searches}
We show experimental running times for bundled elliptic BCH searches with $k \in \{ 1,8,16,32 \}$ in~\cref{fig:ell_bch_and_pd_bundled_eval_plots}.
We find that $k=16$ is best suited here.

Because of the use of shortcut edges, the search trees of each source in the CH only start to overlap at larger distances from the sources.
Therefore, we can bundle edge relaxations and bucket scans in the periphery of the sources but not closer to each source.
Sorted buckets additionally shift the focus of the work closer to the sources where large values of $k$ are ineffective.
The value of $k=16$ strikes a balance between the two aspects of limiting the overhead closer to the sources while bundling operations further away.

The advantage of $k=16$ over $k=32$ is more pronounced on the \BerlinTen instance.
We attribute this to the fact that there are more vehicles in total which means that buckets contain more entries on average.
Thus, with sorted buckets, the focus on work closer to the sources is more pronounced. 

Bundled searches with $k=16$ and without SIMD parallelism (\emph{no SIMD}) lead to only small speedups.
Again, bundling works better for the smaller instance due to a larger relative part of the searches' work being performed further away from the sources.

\parheader{Bundled PD-Distance Searches}
In~\cref{fig:ell_bch_and_pd_bundled_eval_plots}, we show the running times for bundled PD-distances searches with $k \in \{1, 16, 32, 64\}$.
Since our PD-distance BCH searches perform more work further away from the sources than elliptic BCH searches, the search trees of all searches overlap more.
Thus, larger values of $k$ allow effective bundling of edge relaxations and the generation and scan of bucket entries.
Nonetheless, the potential for bundling is limited and larger values of $k$ lead to additional overheads closer to the sources.
In effect, $k=32$ is the best choice for our PD-distance searches.

We additionally consider the running time at $k=32$ without SIMD instructions (\emph{no SIMD}).
We observe speed-ups of up to $2.2$ even without SIMD parallelism.
This is again due to the larger amount of work performed in the periphery of the sources that can be bundled well.

\parheader{Bundled Last Stop Searches}
We depict the running times of bundled Dijkstra searches and individual BCH searches for the PALS and DALS cases in~\cref{fig:last_stop_bundled_eval_plots}.

We find that Dijkstra searches are well suited for bundling as we observe the smallest search times with $k=64$.
Since Dijkstra searches do not use shortcut edges, the searches for each individual source meet much earlier than BCH searches.
Thus, the vast majority of the large number of edge relaxations of Dijkstra searches can be bundled well.
This is evidenced by the fact that we see good speedups for bundled Dijkstra searches even without SIMD instructions.
Larger $k \ge 128$ may be useful for larger numbers of sources but eventually we will run into cache limitations as hundreds of bytes of distance labels need to be handled per vertex. 

The bundling of individual BCH searches is faced with the same issue as elliptic BCH searches.
Because of the sorted buckets, most bucket entry scans are performed close to the sources of the BCH queries.
In fact, this issue is even more pronounced for individual BCH queries as the total radius of each search is smaller due to the cost based stopping criterion.
In the PALS case, the radius of each search is so small that bundling without SIMD instructions even at $k=8$ actually increases the running time. 
Bundled searches work better in the DALS case since the stopping criterion is worse which leads to more work at larger distances.
With SIMD parallelism, $k=8$ has almost no drawbacks compared to $k=1$.
Thus, the optimal value in both cases is $k=8$.

\subsection{Sorted Buckets}
\label{subsec:sorted_buckets_experiments}
\begin{figure}[t]
  \centering
  \includeplot{elliptic_sorted_buckets_eval}
  \caption{Effectiveness of sorted buckets on running time of elliptic BCH queries.
  Shows mean running times (in \ms) at $k=1$ on the \BerlinOne and \BerlinTen instances for $\rho \in \{ 0\s, 150\s, 300\s, 450\s, 600\s \}$.}
  \label{fig:elliptic_sorted_buckets_eval}
\end{figure}
\begin{figure}[t]
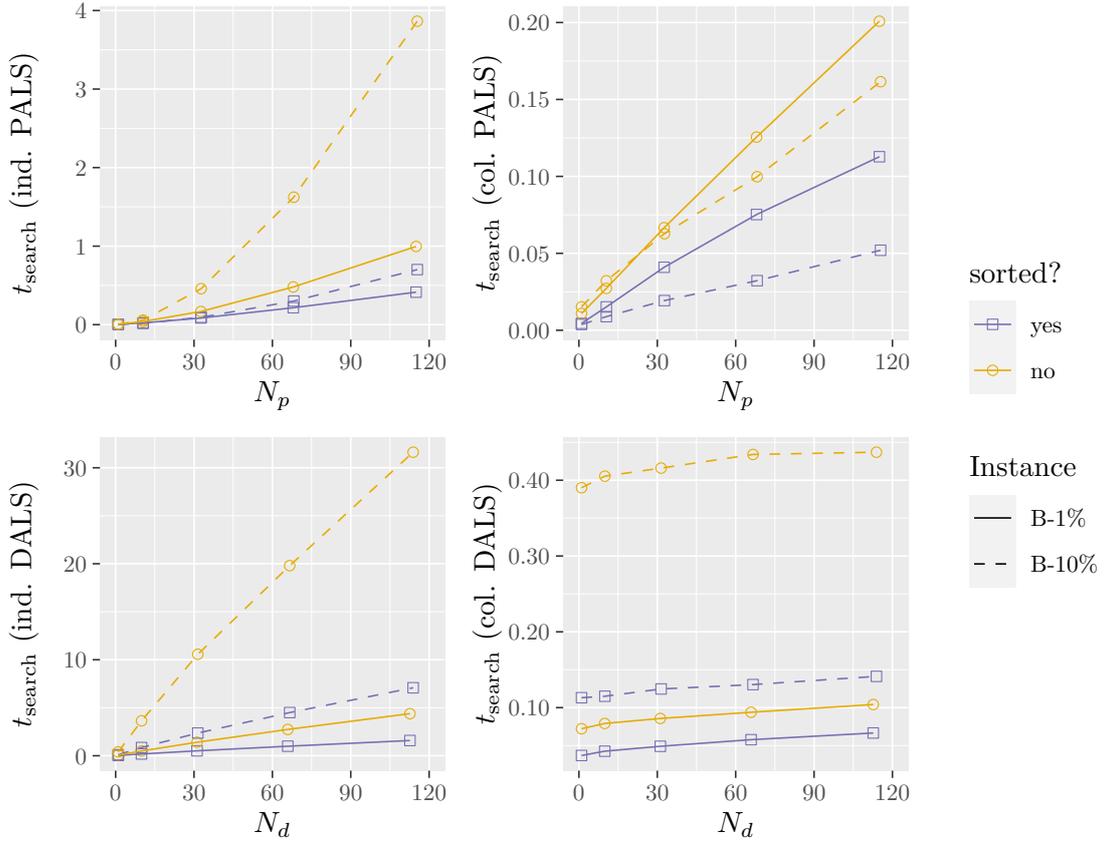

  \centering
  \includeplot{last_stop_sorted_buckets_eval}
  \caption{Effectiveness of sorted buckets on running time of last stop BCH queries.
  Shows mean running times (in \ms) of individual BCH queries (left; $k=1$) and collective BCH queries (right) in the PALS (top) and DALS (bottom) cases on the \BerlinOne and \BerlinTen instances for $\rho \in \{ 0\s, 150\s, 300\s, 450\s, 600\s \}$ with and without sorted buckets.
  Note the different $y$-axes.}
  \label{fig:last_stop_sorted_buckets_eval}
\end{figure}
In the following, we analyze the effect of sorted buckets on elliptic BCH searches as well as individual and collective last stop BCH searches.
We consider the reduction in the number of bucket entries scanned as well as the effects on the running time of the searches and the time for updating buckets.
We experimentally compare all searches with sorted and unsorted buckets on the \BerlinOne and \BerlinTen instances with $\rho \in \{ 0\s, 150\s, 300\s, 450\s, 600\s \}$.
For elliptic BCH searches and individual last stop BCH searches, we use $k=1$.
We only run a single iteration with unsorted buckets for elliptic BCH searches and individual last stop BCH searches. 

\parheader{Sorted Buckets for Elliptic BCH Searches}
The buckets for elliptic BCH searches are already strongly pruned using elliptic pruning.
Therefore, sorting these buckets only elicits a major effect with a sufficiently large number of vehicles. 
As we can see in~\cref{fig:elliptic_sorted_buckets_eval}, sorted buckets only have a limited impact for the \BerlinOne instances but a much larger one for the \BerlinTen instance as the latter considers ten times more vehicles.
On the larger input, sorted buckets reduce the number of entries scanned by about half, which leads to a decrease in the search time by up to $40\%$ ($20\ms$).
At the same time, maintaining the order of bucket entries only increases the time for updating bucket entries by less than $10\mus$.
In conclusion, sorted buckets are a valuable improvement for elliptic BCH searches, particularly with respect to the scalability to larger numbers of vehicles. 

\parheader{Sorted Buckets for Last Stop BCH Searches}
For last stop BCH searches, sorted buckets are vital to reduce the number of bucket entries scanned since we cannot use elliptic pruning.
We show the impact of sorted buckets on the last stop search times in~\cref{fig:last_stop_sorted_buckets_eval}.
For individual BCH searches, more than $95\%$ and $75\%$ fewer bucket entries are scanned with sorted buckets in the PALS and DALS cases, respectively.
This reduces search times by up to $82\%$ and $78\%$.
For collective searches, the number of bucket entries scanned reduces by about the same relative factors.
As collective searches scan fewer bucket entries in total, the impact on the search times is less pronounced with reductions of up to $70\%$ in both cases. 

Maintaining sorted last stop buckets incurs an average overhead of less than $20\mus$ per request while the reduction in running time is between one and three orders of magnitude larger.

\subsection{Collective BCH Searches}
\label{subsec:collective_bch_searches_experiments}
\begin{table}[t!]
  \centering
  \caption{
  Comparison of the running times (in \mus) of collective BCH searches (Coll.), individual BCH searches (BCH), and Dijkstra searches (Dij.) for the PALS and DALS case and three radii $\rho \in \{ 0\s, 300\s, 600\s \}$ on the \BerlinOne and \BerlinTen instances.
  Shows number of edge relaxations ($\#_{\text{rel.}}$), number of bucket entries scanned ($\#_{\text{scans}}$), mean search time ($t_{\text{search}}$) and the mean time for enumerating possible insertions ($t_{\text{enum}}$) per request.
  Times marked in bold are the smallest times per radius.
  }
  \setlength{\tabcolsep}{5.5pt}
  \begin{tabular}{|c|R|c|R|R|R|R|R|R|R|R|}
  \cline{4-11}
  \multicolumn{3}{c|}{} & \multicolumn{4}{c|}{\BerlinOne} & \multicolumn{4}{c|}{\BerlinTen} \\
  \cline{2-11}
  \multicolumn{1}{c|}{} & \makecell[c]{\rho} & Alg. & \makecell[c]{\#_{\text{rel.}}} & \makecell[c]{\#_{\text{scans}}} & \makecell[c]{t_{\text{search}}} & \makecell[c]{t_{\text{enum}}} & \makecell[c]{\#_{\text{rel.}}} & \makecell[c]{\#_{\text{scans}}} & \makecell[c]{t_{\text{search}}} & \makecell[c]{t_{\text{enum}}}\\
  \hline 
  \multirowcell{9}{P\\A\\L\\S} & \multirow{3}{*}{0} & Coll.  & 33 & 7 & 4.46 & 0.49 & 12 & 12 & 3.66 & 0.53  \\
  & & BCH  & 30 & 7 & \mathbf{3.16} & 0.37 & 12 & 7 & \mathbf{2.31} & 0.42 \\
  & & Dij.  & 459 & - & 33.26 & \mathbf{0.30} & 179 & - & 14.38 & \mathbf{0.34} \\
  \cline{2-11}
  & \multirow{3}{*}{300} & Coll.  & 225 & 34 & \mathbf{40.98} & \mathbf{0.65} & 68 & 33 & \mathbf{19.47} & \mathbf{0.68} \\
  & & BCH  & 493 & 152 & 41.20 & 27.55 & 357 & 570 & 62.02 & 93.33  \\
  & & Dij. & 2125 & - & 255.21 & 24.64 & 1599 & - & 228.07 & 80.14 \\
  \cline{2-11}
  & \multirow{3}{*}{600} & Coll. & 439 & 64 & \mathbf{112.81} & \mathbf{0.79} & 100 & 52 & \mathbf{52.47} & \mathbf{0.89} \\
  & & BCH & 2243 & 1135 & 191.32 & 408.40 & 1661 & 5530 & 420.25 & 1839.26 \\
  & & Dij. & 14821 & - & 1551.77 & 366.51 & 12852 & - & 1689.69 & 1681.56  \\
  \hline
  \multirowcell{9}{D\\A\\L\\S} & \multirow{3}{*}{0} & Coll.   & 177 & 1077 & 36.31 & 3.15 & 154 & 7895 & \mathbf{112.76} & \mathbf{6.89}  \\
  & & BCH  & 183 & 1128 & \mathbf{26.14} & \mathbf{3.12} & 159 & 8283 & 116.17 & 7.27 \\
  & & Dij.  & 22317 & - & 1978.23 & 28.55 & 15780 & - & 1411.81 & 132.11 \\
  \cline{2-11}
  & \multirow{3}{*}{300} & Coll.  & 197 & 1053 & \mathbf{48.74} & \mathbf{7.46} & 173 & 7857 & \mathbf{126.67} & \mathbf{20.13} \\
  & & BCH  & 1132 & 4792 & 134.84 & 97.13 & 1035 & 37684 & 605.15 & 363.65  \\
  & & Dij. & 24762 & - & 3444.91 & 174.46 & 18522 & - & 2667.37 & 694.29 \\
  \cline{2-11}
  & \multirow{3}{*}{600} & Coll. & 213 & 1043 & \mathbf{66.51} & \mathbf{14.84} & 189 & 7837 & \mathbf{141.38} & \mathbf{40.57} \\
  & & BCH & 3868 & 15484 & 432.85 & 754.17 & 3509 & 122199 & 1967.98 & 2403.68 \\
  & & Dij. & 62087 & - & 8149.34 & 1639.57 & 49489 & - & 6807.04 & 5992.58  \\
  \hline
  \end{tabular}
  \setlength{\tabcolsep}{6pt}
  \label{tab:last_stop_search_comparison}
\end{table}
In~\cref{tab:last_stop_search_comparison}, we compare the search times and the times needed for the enumeration of candidate insertions for the three search approaches used for the PALS and DALS cases.
Additionally, we show the number of relaxed edges and scanned bucket entries.
We show the results for $\rho \in \{ 0\s, 300\s, 600\s \}$ on the \BerlinOne and \BerlinTen instances.
We only conducted one iteration of the experiments for Dijkstra searches because of their large running time. 

We find that collective searches are slower than individual BCH searches at $\rho=0\s$ (except for the DALS case on \ShortBerlinOne).
This is due to the fact that there is only a single pickup and dropoff which means the overhead for propagating labels instead of only distances is unwarranted.

At $\rho=300\s$ and $\rho=600\s$, collective searches offer the best search times and by far the best enumeration times, though. 
The search times of collective searches are up to an order of magnitude smaller than those of individual BCH searches.
We attribute this to two main advantages of collective searches.

Firstly, collective searches can be pruned more precisely because we use lower bounds on the cost of specific PD-pairs or dropoffs instead of a general lower bound on the cost of every PD-pair or dropoff.
This applies to the stopping criteria for bucket scans and for the searches as a whole.

Secondly, collective searches consider all sources in one search, maximizing the amount of information that can be used by domination pruning.
Bundled searches can only consider $k$ searches at once with time overheads for $k > 8$ (s.a.). 
Therefore, each edge may be scanned by up to $\numpickups / 8$ or $\numdropoffs / 8$ bundled searches with no way to bundle relaxations between searches.
Thus, the number of edge relaxations and bucket entry scans increases much faster with the number of PD-locations ($ \numpickups, \numdropoffs \sim \rho^2$) for individual BCH searches than for collective BCH searches.
In fact, domination pruning works so well in the DALS case that the search time is virtually constant with an increasing number of dropoffs (cf.~\cref{fig:last_stop_sorted_buckets_eval}).

In addition, the enumeration times for collective searches are almost constant, too, while they increase massively with $\rho$ for individual BCH searches.
This is due to the fact that collective searches identify a single candidate insertion during the search while individual BCH searches first find all distances and then enumerate an insertion for each combination of candidate vehicle and PD-pair.
Since the number of PD-pairs is proportional to $\rho^4$, enumeration times quickly become very large with tens of thousands of insertions tried.

\mysubsection{Comparison with LOUD}
\label{subsec:comparison_with_LOUD}
\begin{table}[t]
  \centering
  \caption{Running times (in \mus) of different phases of LOUD ($\rho=\text{L-}\dots$) and \karri with different radii ($\rho \in \{0\s,300\s,600\s\}$) on \ShortBerlinOne, \ShortBerlinTen, \ShortRuhrOne, and \ShortRuhrTen.
  Shows mean times for finding $\Prho$ and $\Drho$, PD-distance searches, elliptic BCH searches, enumerating ordinary and PBNS insertions, PALS and DALS searches, and updating routes and buckets as well as the mean total time per request.
  Numbers with asterisks are estimates.
  }
  \begin{tabular}{|c|R|R|R|R|R|R|R|R|R|}
    \hline
    \makecell[c]{Inst.} & \makecell[c]{\rho} & \makecell[c]{\text{find} \\ \Prho, \Drho} & \makecell[c]{\text{PD}} & \makecell[c]{\text{BCH}} & \makecell[c]{\text{Ord.\&}\\\text{PBNS}} & \makecell[c]{\text{PALS}} & \makecell[c]{\text{DALS}} & \makecell[c]{\text{update}} & \makecell[c]{\text{total}} \\
    \hline
    \multirowcell{4}{\ShortBerlinOne} & \text{L-} 0 & 0 & 17 & 124 & 147 & 43 & 1974 & 84 & 2388 \\
    & 0 & 2 & 36 & 123 & 42 & 3 & 28 & 132 & 367 \\
    & 300 & 95 & 155 & 497 & 119 & 43 & 56 & 138 & 1103 \\
    & 600 & 316 & 735 & 1703 & 469 & 156 & 81 & 140 & 3601 \\
    \hline
    \multirowcell{6}{\ShortBerlinTen} & \text{L-} 0 & 0 & 16 & 368 & 415 & 23 & 1474 & 80 & 2376 \\
    & 0 & 3 & 35 & 362 & 353 & 3 & 119 & 200 & 1074 \\
    & \text{L-} 300 & ^\ast 104 & ^\ast 18930 & 28675 & 736 & 918 & 44071 & ^\ast 80 & ^\ast 93514 \\
    & 300 & 104 & 157 & 1502 & 641 & 24 & 140 & 203 & 2772 \\
    & \text{L-} 600 & ^\ast 353 & ^\ast 246521 & 48856 & 1900 & 8998 & 166244 & ^\ast 80 & ^\ast 472952 \\
    & 600 & 353 & 759 & 4804 & 1741 & 106 & 181 & 213 & 8158 \\
    \hline
    \multirowcell{4}{\ShortRuhrOne} & \text{L-} 0 & 0 & 19 & 173 & 228 & 128 & 8572 & 82 & 9202 \\
    & 0 & 3 & 18 & 163 & 106 & 5 & 34 & 138 & 481 \\
    & 300 & 98 & 132 & 668 & 194 & 62 & 59 & 146 & 1375 \\
    & 600 & 293 & 566 & 1989 & 494 & 216 & 81 & 149 & 3807 \\
    \hline
    \multirowcell{4}{\ShortRuhrTen} & \text{L-} 0 & 0 & 18 & 703 & 944 & 89 & 6154 & 80 & 7988 \\
    & 0 & 3 & 19 & 708 & 944 & 5 & 163 & 298 & 2155 \\
    & 300 & 106 & 138 & 3360 & 1293 & 42 & 182 & 320 & 5462 \\
    & 600 & 324 & 595 & 9396 & 2476 & 128 & 226 & 387 & 13554 \\
    \hline
  \end{tabular}
  \label{tab:comparison_with_LOUD_run_times}
\end{table}
\begin{table}[t]
  \centering
  \caption{Solution quality of \karri with different radii ($\rho \in \{0\s,300\s,600\s\}$) on \ShortBerlinOne, \ShortBerlinTen, \ShortRuhrOne, and \ShortRuhrTen.
  For requests, we report the average and 95\%-quantile wait time, and the average ride and trip times (in mm:ss).
  For vehicles, we give the average times spent driving empty, driving occupied, and making stops, as well as the average total operation time (in hh:mm).}
  \begin{tabular}{|c|r|r|r|r|r|r|r|r|r|}
    \hline
    \makecell[c]{Inst.} & \makecell[c]{$\rho$} & \makecell[c]{\text{wait}} & \makecell[c]{\text{w.-95\%}} & \makecell[c]{\text{ride}} & \makecell[c]{\text{trip}}& \makecell[c]{\text{empty}} & \makecell[c]{\text{occ}} & \makecell[c]{\text{stop}} & \makecell[c]{\text{op}} \\
    \hline
    \multirowcell{3}{\ShortBerlinOne} & 0 & 3:49 & 9:38 & 12:32 & 17:18 & 0:41 & 3:10 & 0:28 & 4:19 \\
    & 300 & 3:22 & 8:24 & 11:56 & 16:13 & 0:30 & 2:56 & 0:30 & 3:56 \\
    & 600 & 3:27 & 8:34 & 11:47 & 16:11 & 0:29 & 2:51 & 0:30 & 3:51 \\
    \hline
    \multirowcell{3}{\ShortBerlinTen} & 0 & 2:31 & 7:11 & 11:57 & 15:07 & 0:16 & 2:29 & 0:26 & 3:10 \\
    & 300 & 2:20 & 6:20 & 11:33 & 14:41 & 0:09 & 2:12 & 0:27 & 2:48 \\
    & 600 & 2:30 & 7:34 & 11:27 & 14:50 & 0:08 & 2:08 & 0:27 & 2:42 \\
    \hline
    \multirowcell{3}{\ShortRuhrOne} & 0 & 4:49 & 11:48 & 12:25 & 18:26 & 0:56 & 3:15 & 0:27 & 4:39 \\
    & 300 & 4:15 & 10:20 & 11:48 & 17:07 & 0:44 & 3:03 & 0:29 & 4:16 \\
    & 600 & 4:17 & 9:52 & 11:33 & 16:52 & 0:42 & 2:58 & 0:29 & 4:09 \\
    \hline
    \multirowcell{3}{\ShortRuhrTen} & 0 & 3:11 & 8:46 & 11:46 & 15:48 & 0:24 & 2:40 & 0:25 & 3:29 \\
    & 300 & 2:45 & 7:07 & 11:13 & 14:51 & 0:16 & 2:26 & 0:26 & 3:08 \\
    & 600 & 2:55 & 7:58 & 11:06 & 14:56 & 0:15 & 2:22 & 0:26 & 3:03 \\
    \hline
  \end{tabular}
  \label{tab:comparison_with_LOUD_solution_quality}
\end{table}
In this section, we compare our approach with the LOUD algorithm~\cite{buchhold2021fast}.

\parheader{Running Times} 
We give the running times for the different phases of both algorithms on the \BerlinOne, \BerlinTen, \RuhrOne, and \RuhrTen instances in~\cref{tab:comparison_with_LOUD_run_times}.
For \karri, we consider $\rho \in \{ 0\s, 300\s, 600\s \}$ and use the optimal last stop search approach in each configuration.
We also report the running times of the LOUD algorithm ($\rho=\text{L-}0)$) on all four instances.
Additionally, we consider an estimate for the running time of a na\"ive extension of LOUD to multiple PD-locations ($\rho=\text{L-}300$ and $\rho=\text{L-}600$) for the medium sized \BerlinTen instance.

First, we consider the scenario with a single pickup and dropoff ($\rho=0\s$) which is the scenario considered by LOUD.
Here, sorted buckets have little impact on the search times of elliptic BCH searches even though the number of bucket entries scanned is reduced.
We attribute this to the fact that our implementation is meant to deal with any number of PD-locations while LOUD is specialized for the case of $\numpickups=\numdropoffs=1$.
Our last stop BCH searches are well suited for $\rho=0\s$, though.
They are up to $27$ and $250$ times faster than LOUD's Dijkstra searches in the PALS and DALS cases, respectively.
Maintaining sorted last stop buckets does lead to increased update times, though, especially for the larger instance where buckets contain more entries.
In total, we can reduce the average time per request by factors of $6.5$ for \BerlinOne and $2.2$ for \BerlinTen compared to LOUD.
On the larger Ruhr instances, we achieve speedups of $19.1$ and $3.7$ for the \RuhrOne and \RuhrTen instances.

Next, we consider our estimate for an extension of LOUD that uses only the techniques of the original algorithm.
For this, we configured \karri to use no bundled searches, no sorted buckets, and to use Dijkstra searches for the PALS and DALS cases.
For the PD-distances, we obtained an estimate for the time of running one CH-query per PD-pair by multiplying LOUD's PD-distance search time with $\numpickups \cdot \numdropoffs$.
We find that bundling and sorted buckets make elliptic BCH searches about one order of magnitude faster than the na\"ive extension.
For the PALS and DALS searches, our collective BCH approach beats the standard Dijkstra approach by two and three orders of magnitude, respectively.
We assume that the CH-queries for PD-distances would in reality be faster than our estimate.
Nonetheless, it is notable that our BCH based approach is hundreds of times faster than the estimate. 

\parheader{Solution Quality}
In the following, we give a first idea of how trip times and vehicle operation times can be improved by extending ridesharing with walking.

In~\cref{tab:comparison_with_LOUD_solution_quality}, we compare the solution quality of \karri with $\rho \in \{ 0\s, 300\s, 600\s \}$.
Note that we allow passengers to walk to their destinations if the resulting walking time leads to better cost than a ridesharing trip (see~\cref{subsec:walking_time_and_walking_to_the_destination}).
The cost function used here equally weights the passenger trip times and vehicle operation times ($\tripweight=1$).
With larger values of $\rho$, we observe improvements in both criteria.
At $\rho=300\s$, the average vehicle operations times and passenger wait times improve by up to $12\%$ while trip times improve by up to $7\%$.
At $\rho=600\s$, the vehicle operation times and passenger ride times reduce further compared to $\rho=300\s$.

There are a number of parameters not evaluated in these preliminary results.
For instance, we have not considered the willingness of passengers to walk longer distances in order to reduce trip times in our cost function (since $\walkweight = 0$).
Also, we only show results for a fixed number of vehicles and a density of requests representing a regular week day.
In the future, we would like to include these parameters in our analysis of the effects of PD-locations on ridesharing and extend this evaluation to larger inputs.

\section{Conclusions and Future Work}
\label{sec:conclusion}
\karri develops efficient many-to-many routing
with bucket contraction hierarchies that allows 
efficient scheduling of large vehicle fleets
considering many pickup and dropoff locations.  
A flexible cost function allows
configuration to many situations, e.g. using
walking, bicycles or scooters.  We expect that the new
techniques like sorted buckets can also be applied for other problems
that use many-to-many routing with correlated sources
and targets.

Next, we want to use \karri to evaluate different
traffic scenarios.  For example, we expect that
even larger savings over single PD-locations are
possible when using faster individual transport
or when using fewer but larger vehicles.

Future work on the algorithmic side can achieve acceleration by
clustering routing sources into batches of size $k$ by
their proximity, by finding a collective approach for
elliptic BCH searches, and by parallelizing the algorithm
both over requests (that use different vehicles) and over
different PD-locations.
The overall cost could be further optimized by
going away from greedy online scheduling, 
instead taking into account pre-booked trips and
opportunities to transparently change existing
trips for local search style optimizations.

We expect that additional generalizations can
integrate \karri with public transportation where
pickup and dropoff locations can be stops of
buses or trains and where the cost function has
to take into account the public transportation
schedule.

A longer term perspective is to allow switching
vehicles during a trip.  This opens up the possibility
of more sharing using larger vehicles,
eventually leading to a highly adaptive software defined public transportation system.  This 
implies interesting algorithmic challenges as it
leads to a combinatorial explosion
of possible route options.



\bibliography{karri}


\end{document}